\newtheorem{Proposition}{Proposition}
\newtheorem{Theorem}{Theorem}
\newtheorem{Definition}{Definition}
\newtheorem{proof}{Proof}
\newtheorem{Remark}{Remark}
\newtheorem{Corollary}{Corollary}
\newcommand{\CC}{\mathbb{C}}
\newcommand{\E}{\mathbb{E}}
\newcommand{\X}{\mathbf{X}}
\newcommand{\U}{\mathbf{U}}
\newcommand{\V}{\mathbf{V}}
\newcommand{\0}{\mathbf{0}}
\newcommand{\x}{\mathbf{x}}
\newcommand{\LL}{\bm{L}}
\newcommand{\DD}{\bm{D}}
\newcommand{\A}{\bm{A}}
\newcommand{\B}{\bm{B}}
\newcommand{\C}{\bm{C}}
\newcommand{\I}{\bm{I}}
\newcommand{\K}{\bm{K}}
\newcommand{\M}{\bm{M}}
\newcommand{\OO}{\bm{O}}
\newcommand{\Siga}{\bm{\Sigma}}
\newcommand{\Dada}{\bm{\Delta}}
\newcommand{\muv}{\bm{\mu}}
\newcommand{\SSS}{\bm{S}}
\newcommand{\diag}{\mathrm{diag}}
\newcommand{\Var}{\mathrm{Var}}
\newcommand{\pa}{\, \mathrm{par}}
\newcommand{\indep}{\perp\!\!\!\!\perp} 
\let\svthefootnote\thefootnote
\newcommand\blankfootnote[1]{%
  \let\thefootnote\relax\footnotetext{#1}%
  \let\thefootnote\svthefootnote%
}
\title{Causal Vector Auto-Regression Enhanced with Covariance and Order Selection}
\let\@fnsymbol\@arabic
\author{%
Marianna Bolla\thanks{
Department of Stochastics, Budapest University of Technology and Economics, Hungary; marib@math.bme.hu
}
$^{,*}$,
Dongze Ye\thanks{
{Department of Computer Science, University of Southern California, USA}; dongzeye@usc.edu
} ,
Haoyu Wang\thanks{
{Department of Computational and Applied Mathematics, University
of Chicago, USA}; haoyuwang@uchicago.edu
} ,
Renyuan Ma\thanks{
{Department of Statistics, Yale University, USA}; jack.ma.rm2545@yale.edu
} ,
Valentin Frappier\thanks{{UFR Sciences and Techniques, Nantes University, France}; valentin.frappier@gmail.com
} , 
William Thompson\thanks{
{Lindner College of Business, University of Cincinnati, USA}; thompson.williamfn@gmail.com
} ,
Catherine Donner\thanks{
{Data Science and Analytics Institute, University of Oklahoma in
Norman, USA}
} ,
M\'at\'e Baranyi\thanks{
{Department of Stochastics, 
Budapest University of Technology and Economics, Hungary}; baranyim@math.bme.hu
}
, and 
Fatma Abdelkhalek\thanks{
{Department of Statistics, Mathematics, and Insurance, Faculty of Commerce, Assiut University, Egypt}; fatma.said@aun.edu.eg
}
}
\date{}
\begin{document}
\maketitle
\begin{abstract}
A causal vector autoregressive (CVAR) model is introduced for weakly stationary multivariate processes, combining a recursive directed graphical model for the contemporaneous components and a vector autoregressive model longitudinally. Block Cholesky decomposition with varying block sizes is used to solve the model equations and estimate the path coefficients along a directed acyclic graph (DAG). If the DAG is decomposable, i.e. the zeros form a reducible zero pattern (RZP) in its adjacency matrix,  then  covariance selection is applied that assigns zeros to the corresponding path coefficients.
Real life applications are also considered, where for the optimal order \texorpdfstring{$p\ge 1$}{p≥1} of  the fitted CVAR\texorpdfstring{$(p)$}{(p)} model, order selection is performed with various information criteria.
\end{abstract}
\blankfootnote{\hspace{-4pt}*Correspondence: marib@math.bme.hu; Tel.: +36 1 2000646}%
  
\noindent
\textbf{Keywords:} Structural vector Auto-Regression; Causality along a DAG; Block Cholesky decomposition; Covariance selection; Order selection

\noindent\textbf{PACS:} J0101  \quad \textbf{MSC:} 15B05; 62M10; 65F99; 65F30

\section{Introduction}\label{intro}

The purpose of this paper is to connect graphical modeling tools and time series models together via path coefficient estimation. In statistics, the path analysis was established by the geneticist~\cite{Wright} about a century
ago, but he used complicated entrywise calculations with partial correlations. Taking these partial correlations of a pair of variables in a multidimensional data set conditioned on another set of variables, makes things overtly complicated, as the conditioning set changes in every step. A bit later, in econometrics, structural equation modeling (SEM) was developed;
the prominent author~\cite{Haavelmo} obtained the Nobel price for it later. The maximum likelihood estimation (MLE) of the parameters in the Gaussian case  was elaborated by~\cite{Joreskog}. At the same time, \cite{Wold1}, inventor of partial least squares regression (PLS), used matrix calculations, and~\cite{Kiiveri} already used block matrix decompositions when dividing their variables into endogenous and
exogenous ones. However, none of these authors applied steadily
algorithms of block LDL (variant of the Cholesky)  decomposition
alone, without using partial correlations. Further, they did not consider time series. 

Here we  give a rigorous block matrix approach of these problems originated in statistics and time series analysis.
Furthermore, we enhance the usual and structural vector
autoregressive (VAR and SVAR) models, discussed e.g. in~\cite{Deistler} and~\cite{Deistlerbook}, with a causal component that has an effect between the coordinates contemporaneously. Therefore, we call our causal vector autoregressive  model CVAR.
Joint effects between the contemporaneous components are also considered in SVAR models of~\cite{Keating} and~\cite{Lutkepohl,Kilian}, but just recursive ordering of the variables, and no specific structure of the underlying directed graph is investigated. 
Though in~\cite{Wold} a causal chain model is introduced with an exogenous and a lagged endogenous variable, Gaussian Markov processes and usual regression estimates are used in context of econometrical problems.
This research is also  inspired by the paper~\cite{Wermuth}, 
where recursive ordering of the variables is crucial, without using any time component. 

\cite{Eichler1} introduces causality as a fundamental tool for  the empirical investigations of dynamic interactions in multivariate time series. He also discusses the differences between the structural and Granger causality. Former one appears in the SVAR models (see~\cite{Geweke}), whereas the latter one first appears in~\cite{Wiener}, then in~\cite{Granger}, and is sometimes called Wiener--Granger causality.
Without causality between the contemporaneous components, our model in the Gaussian case also resembles the one of~\cite{Eichler}, where the error term (shock) can have correlated components. Our higher order recursive VAR model can be transformed into a model like this, but the price is loosing
the recursive structure. The VAR model of~\cite{Brillinger} has a similar structure as ours with uncorrelated error terms; but no
further benefits of recursive models, like RZP, induced by the underlying DAG, are discussed.
\cite{Sims} investigates the use of different type structural equation and autoregressive models in macroeconomy,  
without suggesting  numerical algorithms.
However, historically this survey paper was among the first ones which pointed out the difference between the so far existing macroeconomic models and distinguished endogenous and exogenous variables. The method of the most recent paper~\cite{Bazinas} is based on the reduced form system and is constructed by the conditional distribution of two endogenous variables, given a catalyst or multiple catalysts; lagged effects are assessed, without having a longer time series, and stationarity is not assumed. 

Throughout the paper, second order processes are considered that can be assumed to follow multivariate Gaussian distribution.
In Section~\ref{methods}, the different types of VAR models are compared, and  a novel  causal VAR model is introduced,
combining the above recursive model contemporaneously and a VAR$(p)$ model longitudinally. In Section~\ref{results}, the models are described in details, together with introducing algorithms for the parameter estimation.
In Subsection~\ref{model}, the unrestricted CVAR(1) model is introduced, while in Subsection~\ref{incomplete}, the restricted cases are treated, with some prescribed zeros in the path coefficients. Relation to covariance selection and decomposability is discussed too.
In Subsection~\ref{high}, higher order CVAR models are introduced. 
In Section~\ref{appl}, application to real life data is presented together with information criteria for order selection (optimal choice of $p$).
The results and estimation schemes are summarized in Section~\ref{discussion};
finally, in Section~\ref{concl}, conclusions and further 
perspectives are posed. The proofs of the theorems and the detailed description of the algorithms are to be found in Appendix~\ref{appA}, 
while a survey about graphical
models, with emphases to the Gaussian case, in Appendix~\ref{appB}.

\section{Materials \& Methods}\label{methods}

First the different purpose VAR$(p)$ models for the $d$-dimensional,
weakly stationary process $\{ \X_t \}$ are enlisted and compared. The first two
models are known in the literature, whereas the last two ones are our
contributions, for which block matrix decomposition based algorithms are 
introduced in Section~\ref{results}, and they are illustrated in 
Section~\ref{appl} on real life data.

\begin{itemize}
\item
\textbf{Reduced form} VAR$(p)$ model: for given integer $p\ge 1$, it is
\begin{equation}\label{reduced}
 \X_{t} + \M_1 \X_{t-1} + \dots + \M_{p} \X_{t-p } =\V_t , 
  \quad t=p+1, p+2 , \dots ,
\end{equation}
where $\V_t$ is white noise, it
 is uncorrelated with 
$\X_{t-1},\dots ,\X_{t-p}$, it has zero expectation and  
covariance matrix $\Siga$
(not necessarily diagonal,  but positive definite), and the matrices $\M_j$
satisfy the stability conditions (see~\cite{Deistler}). 
(Sometimes $\X_t$ is isolated on the 
left hand side.) $\V_t$ is called \textit{innovation}, 
i.e. the error term of the (added value to the)
best one-step ahead linear prediction of  $\X_{t}$ with its  past,
which (in case of a VAR($p$) model) can be  done with the $p$-lag long past 
$\X_{t-1} , \dots , \X_{t-p}$. 

\item
\textbf{Structural form} SVAR$(p)$ model: for given integer $p\ge 1$, it is
\begin{equation}\label{structural}
 \A\X_{t} + \B_1 \X_{t-1} + \dots + \B_{p} \X_{t-p } = \U_t , \quad
  t=p+1, p+2 , \dots  ,
\end{equation}
where the white noise term $\U_t$ is uncorrelated with 
$\X_{t-1},\dots ,\X_{t-p}$, it has zero expectation with uncorrelated
components, i.e. with positive definite, diagonal
covariance matrix $\Dada$.
$\A$ is $d\times d$ upper triangular matrix with 1s along its main diagonal;
whereas, $\B_1 ,\dots ,\B_p$ are $d\times d$ matrices, 
see also~\cite{Lutkepohl}.

There is a one-to-one correspondence between the reduced and structural
model; since $\A$ is invertible, from Equation~\eqref{structural}, 
Equation~\eqref{reduced} can be obtained (and vice versa):
$$
 \X_{t} + {\A}^{-1} \B_1 \X_{t-1} + \dots +  {\A}^{-1} \B_{p} \X_{t-p } =  
{\A}^{-1} \U_t , \quad  t=p+1, p+2 , \dots  ,
$$
where  $\M_j = {\A}^{-1} \B_j$, $\V_t ={\A}^{-1} \U_t$, and 
$\Siga ={\A}^{-1} \Dada {{\A}^T}^{-1}$; further, $|\Siga |=|\Dada |$
as $| \A | =1$.  

Note that if the ordering of the components of $\X_t$ is changed (with some
permutation of $\{ 1,\dots ,d \}$), then
the rows of the matrices $\M_j$, further, the rows and columns of $\Siga$ 
are permuted accordingly. 
However, the matrices $\A$, $\B_j$ and $\Dada$ cannot be obtained
in this simple way, they profoundly change under the above permutation. 


\item
\textbf{Causal} CVAR$(p)$ \textbf{unrestricted} model: 
it also obeys Equation~\eqref{structural}, but here the ordering
of the components follows a causal ordering, given e.g. by an expert knowledge.
This is a recursive ordering along a ``complete'' DAG, where the permutation
(labeling)
of the graph nodes (assigned to the components of $\X_t$) is such that 
$X_{t,i}$ can be caused by  $X_{t,j}$ whenever $i<j$,  which means a $j\to i$
directed edge. Here the causal effects are meant contemporaneously, and
reflected in the upper triangular structure of the matrix $\A$.

It is important that in any ordering of the jointly Gaussian 
variables, a Bayesian network or a Gaussian directed graphical model
can be constructed, in which every node (variable) is
regressed linearly with the variables corresponding to higher label nodes,
see Appendix~\ref{appB}.
The partial regression coefficients behave like path coefficients, also used
in SEM.
If the DAG is complete, then
there are no zero constraints imposed on the
partial regression coefficients.

\item
\textbf{Causal} CVAR$(p)$ \textbf{restricted} model: 
here an incomplete DAG is built, based on partial correlations, but in
the conditioning set the $p$-lag long past also counts.

First we build an undirected graph: not
connect $i$ and $j$ if the partial correlation coefficient of 
$X_{i}$ and $X_{j}$, eliminating the effect of the other variables 
is 0 (theoretically), or less than a threshold (practically). 
Such an undirected graphical model is called Markov random field (RMF).
It is known (see~\cite{Rao} and~\cite{Lauritzen})
that partial correlations can be calculated from the concentration matrix
(inverse of the covariance matrix). But here the upper left block of the
inverse of the large block matrix, containing the first $p$ autocovariance
matrices, is used.
If this undirected graph is triangulated, 
then in a convenient (so-called perfect)
ordering of the nodes, the zeros of the adjacency matrix form an RZP.
We can find such an ordering (not necessarily unique) with the 
maximal cardinality search (MCS) algorithm, 
together with cliques and separators of a so-called junction tree (JT),
see~\cite{Bollacta} and Appendix~\ref{appB}. 
In this ordering (labeling) of
the nodes, a DAG can also be constructed, which is Markov
equivalent to the undirected one (it has no so-called sink V configuration),
for further details see Subsection~\ref{incomplete}.

Observe that in this DAG, in the Gaussian case,  the fact
that for $i<j$: $X_i$ and $X_j$ are conditionally independent given
$\{ i+1 ,\dots ,d \} \setminus \{ j \}$ is equivalent to the fact that
$X_i$ and $X_j$ are conditionally independent given all the
remaining components.
(It does not mean that the regression or partial correlation coefficients are
the same, just they are zeros at the same time.)

Having an RZP in the CVAR restricted model, we use the incomplete DAG for 
estimation. With the covariance selection method of~\cite{Dempster},
 the starting concentration matrix is  re-estimated by imposing zero
constraints for its entries in the RZP positions (symmetrically).
By the theory (see, e.g.~\cite{Bollacta}), this will result in
zero entries of $\A$ in the no directed edge positions. This also becomes
obvious from Appendix~\ref{appB} and the algorithm of the Appendix~\ref{alg2}. 

\end{itemize}

Note that the unrestricted CVAR model can as well use an incomplete DAG, 
where the
labeling of its nodes follows the perfect labeling of the undirected
graph; still,
the  parameter matrices $\A$ and $\B_j$s are ``full''  in the sense
that no zeros of $\A$ are guaranteed in the no-edge positions of the graph. 
Their entries are just considered as path 
coefficients of the contemporaneous and lagged effects, respectively. 
On the contrary, in the restricted CVAR model, action is done for introducing
zero entries in $\A$ in the no-edge positions. If the desired zeros form
an RZP, the covariance selection has a closed form (see~\cite{Lauritzen}).
In the lack of an RZP, the covariance selection also works, but it needs an
infinite (convergent) iteration, called Iterative Proportional Scaling (IPS),
see Appendix~\ref{appB}.

Then both in the unrestricted and restricted CVAR$(p)$ models an order
selection is initiated to choose the optimal $p$,
based on information criteria, like AIC, BIC, AICC, and 
HQ, where only the number of parameters differs in the two cases. 
Actually, in the restricted case, the covariances are estimated only
within the cliques and separators of the JT, which can reduce the
computational complexity of our algorithm when the number of nodes is
``large''.


\section{Results}\label{results}

\subsection{The Unrestricted Causal VAR(1) Model}\label{model}

The directed Gaussian graphical model of~\cite{Wermuth} (see Appendix~\ref{rec})
does not consider time development; it is, in fact, a CVAR(0) model.
Also note that at this point, the ordering of the jointly 
Gaussian variables is not relevant, since in any recursive ordering of them 
(encoded in $\A$) a Gaussian directed graphical model
(in other words, a Gaussian Bayesian network) can be constructed, where every 
variable is regressed linearly with the higher label ones. 
This is due to the 
solvability of the recursive equation system~\eqref{matrix}  with the LDL 
decomposition~\eqref{LDLdecomp} in any ordering of the rows and columns of
$\Siga$.


To introduce the unrestricted CVAR(1) model, 
let $\{ \X_t \}$ be a $d$-dimensional, weakly  stationary process with
real valued components of zero expectation 
and covariance matrix function $\C (h)$,
$h=0,\pm 1 ,\pm 2 ,\dots$; $\C (-h ) =\C^T (h )$. All deterministic and
random vectors are column vectors and so, $\C (h )=\E \X_t \X_{t+h}^T$
does not depend on $t$, by weak stationarity.
The model equation is
\begin{equation}\label{mimo}
 \A\X_{t} + \B \X_{t-1} = \U_t , \quad t=1,2,\dots  ,
\end{equation}
where  $\A$ is $d\times d$ upper triangular matrix with 1s along its main 
diagonal, $\B$ is a $d\times d$ matrix; further,
the white noise random vector $\U_t$ is uncorrelated with
(in the Gaussian case,
independent of) $\X_{t-1}$, has zero expectation and  
covariance matrix $\Dada =\diag (\delta_1 ,\dots ,\delta_d )$.

Let $\mathfrak{C}_2$ denote the covariance matrix of the stacked random vector
 $(\X^T_{t} , \X^T_{t-1} )^T$ which, in block matrix form, is as follows:
\begin{equation}\label{egyik}
 \mathfrak{C}_2 = \begin{pmatrix}
  \C (0)  & \C^T (1) \\
 \C (1)  & \C(0)
 \end{pmatrix} .
\end{equation}
It is symmetric and positive definite if the process is of full rank regular
(which means that its spectral density matrix of~\cite{Szabados} 
is of full rank)
that is assumed in the sequel. It is well known that the inverse of 
$\mathfrak{C}_2$, the 
so-called concentration matrix $\K$, has the block-matrix form
$$
 \K = \begin{pmatrix}
  \C^{-1} (1|0)  & -\C^{-1} (1|0) \C^T (1) \C^{-1} (0) \\
 -\C^{-1} (0) \C (1) \C^{-1} (1|0) & 
  \C^{-1} (0) +\C^{-1} (0) \C(1) \C^{-1} (1|0) \C^T (1)\C^{-1} (0)
 \end{pmatrix} ,
$$
where $\C (1|0) = \C (0) - \C^T (1) \C^{-1} (0) \C(1)$ is the conditional
covariance matrix $\C (t|t-1)$ of the distribution of $\X_t$ conditioned on 
$\X_{t-1}$; by weak stationarity,
it does not depend on $t$ either, therefore it is denoted by $\C (1|0)$.
Also,  $\mathfrak{C}_2$ is positive definite if and only if
both $\C(0)$ and  $\C (1|0)$ are positive definite.

Observe that $\C (1|0) = \A^{-1} \Dada {\A^{-1}}^T$ is  the covariance 
matrix of the innovation $\V_t ={\A}^{-1} \U_t$. Therefore, the left upper
block of $\K$ contains its inverse, which is  $\A^{T} \Dada^{-1} \A$.

\begin{Theorem}\label{tetel1}
The parameter matrices $\A$, $\B$, and  $\Dada$ of
model Equation~(\ref{mimo}) can be obtained by the block LDL decomposition
of the (positive definite) concentration matrix $\K$  
(inverse of the covariance matrix $\mathfrak{C}_2$ in Equation~\eqref{egyik}) 
 of the $2d$-dimensional Gaussian random vector $(\X^T_{t} , \X^T_{t-1} )^T$.
If $\K =\LL \DD \LL^T$ is this (unique) decomposition  
with block-triangular matrix $\LL$ and block-diagonal matrix 
$\DD$, then they have the form
\begin{equation}\label{fatma}
\LL = \begin{pmatrix}
  \A^T  &\OO_{d\times d} \\

 \B^T  &\I_{d\times d}
 \end{pmatrix} , \qquad
\DD = \begin{pmatrix}
  \Dada^{-1}  &\OO_{d\times d} \\
  \OO_{d\times d}  &\C^{-1} (0) 
 \end{pmatrix} ,
\end{equation}
where  the  $d\times d$ upper triangular matrix $\A$
with 1s along its main 
diagonal, the $d\times d$ matrix $\B$,  
and the diagonal matrix $\Dada$ of
model Equation~(\ref{mimo}) can be retrieved from them.
\end{Theorem}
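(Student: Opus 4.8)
The plan is to verify the claimed block LDL decomposition by direct multiplication, reading off the parameter matrices from the structure of $\LL$ and $\DD$, and then invoking uniqueness of the decomposition.

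First I would establish the factorization of $\mathfrak{C}_2$ rather than $\K$ directly, since the model equation $\A\X_t + \B\X_{t-1} = \U_t$ together with the conditional‐covariance identities in the paragraph preceding the theorem already gives us the pieces. From $\A\X_t = -\B\X_{t-1} + \U_t$ and the uncorrelatedness of $\U_t$ with $\X_{t-1}$, one gets the regression of $\X_t$ on $\X_{t-1}$: the regression coefficient is $-\A^{-1}\B = \C^T(1)\C^{-1}(0)$, and the conditional covariance is $\C(1|0) = \A^{-1}\Dada{\A^{-1}}^T$, both of which are stated in the excerpt. This yields the block-Cholesky-type factorization
\begin{equation*}
\mathfrak{C}_2 = \begin{pmatrix} \I & -\A^{-1}\B \\ \OO & \I \end{pmatrix} \begin{pmatrix} \A^{-1}\Dada{\A^{-1}}^T & \OO \\ \OO & \C(0) \end{pmatrix} \begin{pmatrix} \I & \OO \\ -(\A^{-1}\B)^T & \I \end{pmatrix},
\end{equation*}
which one checks by carrying out the block product and matching against Equation~\eqref{egyik} using $\C(0) = \C(1|0) + \C^T(1)\C^{-1}(0)\C(1)$ and $\C(1) = \C(0)(-\A^{-1}\B)^T{}^{\!-1}\cdots$; really it is just the Schur-complement identity for a $2\times 2$ block matrix, so no genuine computation is needed.

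Next I would invert this factorization. Since the outer factors are block unit-triangular, their inverses are obtained by flipping the sign of the off-diagonal block, and the inverse of the middle block-diagonal factor is block-diagonal with entries $\A^T\Dada^{-1}\A$ and $\C^{-1}(0)$. Hence
\begin{equation*}
\K = \mathfrak{C}_2^{-1} = \begin{pmatrix} \I & \OO \\ (\A^{-1}\B)^T & \I \end{pmatrix} \begin{pmatrix} \A^T\Dada^{-1}\A & \OO \\ \OO & \C^{-1}(0) \end{pmatrix} \begin{pmatrix} \I & \A^{-1}\B \\ \OO & \I \end{pmatrix}.
\end{equation*}
To bring this into the stated $\LL\DD\LL^T$ form with $\LL$ block lower-triangular and $\DD$ block-diagonal, I absorb $\A^T$ into the left factor: writing $\A^T\Dada^{-1}\A = \A^T(\Dada^{-1})\A$ and noting $\B^T = (\A^{-1}\B)^T\A^T$, the left block-triangular matrix becomes $\begin{pmatrix}\A^T & \OO \\ \B^T & \I\end{pmatrix}$ and $\DD$ becomes $\diag(\Dada^{-1}, \C^{-1}(0))$, exactly as in Equation~\eqref{fatma}. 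Finally, since $\A$ is upper triangular with unit diagonal, $\A^T$ is lower triangular with unit diagonal, so $\LL$ is genuinely block lower-triangular with unit-triangular diagonal blocks and $\DD$ is block-diagonal; by the uniqueness of the block LDL decomposition of a positive definite matrix (granted once the block sizes are fixed, here $d$ and $d$), this is \emph{the} decomposition, and $\A$, $\B$, $\Dada$ are recovered from $\LL$ and $\DD$ by transposition and inversion.

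I expect the only delicate point to be the bookkeeping that turns the "natural" Schur-complement factorization (whose triangular factor carries $\I$ on the diagonal) into the required form whose diagonal blocks are $\A^T$ and $\I$ — i.e. the choice of how to split $\A^T\Dada^{-1}\A$ between $\LL$, $\DD$, $\LL^T$. Once one insists that $\DD$ be block-diagonal and that the $(1,1)$ block of $\LL$ be upper-triangularly structured so the factorization is compatible with a \emph{scalar} LDL refinement inside the first block, the split is forced, and this is precisely what makes the decomposition encode the recursive (DAG) structure of the contemporaneous model. Everything else is routine block-matrix algebra already half-done in the text preceding the statement.
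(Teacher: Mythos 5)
Your proposal is correct and follows essentially the same route as the paper: the paper's proof likewise reduces to direct block-matrix algebra --- it computes $(\LL\DD\LL^T)^{-1}$ for the candidate factors of Equation~\eqref{fatma}, matches it block by block against $\mathfrak{C}_2$ using exactly the model-implied identities $\B=-\A\C^T(1)\C^{-1}(0)$ and $\Dada=\A\C(0)\A^T-\B\C(0)\B^T$, and concludes by uniqueness of the block LDL decomposition for the partition into $d$ scalar blocks followed by one $d\times d$ block. You merely run the same computation in the opposite direction (Schur-complement factorization of $\mathfrak{C}_2$, then inversion), and your closing observation that the first $d\times d$ block must be refined into scalar blocks is precisely the partition the paper adopts, so your parenthetical ``block sizes $d$ and $d$'' should be read with that refinement in mind.
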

The proof of this theorem together with the detailed description of the
algorithm is to be found in Sections~\ref{biz1} and~\ref{alg1} of 
Appendix~\ref{appA}.

\subsection{The Restricted Causal VAR(1) Model}\label{incomplete} 

 Assume that we have a causal
ordering of the coordinates $X_1 ,\dots ,X_d$ of $\X$ such that $X_j$ can be
the cause of $X_i$ whenever $i<j$. We can think of $X_i$s as the nodes  of
a graph in a directed graphical model (Bayesian network) and their labeling
corresponds to a topological 
ordering of the nodes of the underlying DAG. 
So $i<j$ can imply an $i\to j$ edge, and then we say that $X_j$ is a parent 
(cause) of $X_i$ ($X_i$ can have multiple parents, maximum $d-i$ ones),
see Section~\ref{bn}.
For example, when asset prices or relative returns of different assets or 
currencies 
(on the same day) influence each other in a certain (recursive) order.
Now, restricted cases are analyzed, when only certain arrows (causes)
are present, but the DAG is connected. 
In particular, only certain asset prices influence some others on a DAG
contemporaneously, but
not all possible directed edges are present.
In this case, a covariance selection technique can be initiated to
re-estimate the covariance matrix so that the partial regression coefficients
in the no-edge positions be zeros.

When the DAG is built from an undirected graph, then
we also require that the so constructed DAG be Markov equivalent
to its undirected skeleton. Then the DAG must not contain sink V 
configuration (see the forthcoming Remark~\ref{sink}), 
which fact is equivalent to having a so-called 
RZP (see~\cite{Wermuth}) in the
adjacency matrix of the undirected graph in the DAG labeling
(also see the forthcoming Definition~\ref{RZPdir}). In this case,
the positions of the zero entries in the concentration matrix  are identical 
to the  positions of the zero entries in $\A$.
More exactly, for $i=1,\dots ,d-1$ and $j=i+1 ,\dots ,d$:
$\beta_{ij \cdot \{ i+1 ,\dots ,d \} }$, the  partial regression 
coefficient of $X_j$ when  regressing $X_i$
with $X_{i+1} ,\dots ,X_d $, is zero exactly when 
$r_{ij \cdot \{ 1 ,\dots ,d \} \setminus \{ i,j \}  }$ (the  partial correlation 
coefficient between $X_i$ and $X_j$, excluding the effect of all other
variables) is zero.
In other wording, if the DAG does not contain sink V configuration, then it
is Markov equivalent to it undirected skeleton, see~\cite{Lauritzen}
and Corollary~\ref{moral} of Appendix~\ref{decomp}.

In the sequel, we will consider these so-called decomposable cases, 
though there are algorithms, like the Iterative Proportional Scaling (IPS)
for the general case too.
About the relation of the directed and undirected cases, and
equivalent notions of decomposability, e.g. existence of a junction tree (JT)
structure, see Appendix~\ref{decomp}.


In the unrestricted model, no restrictions for the upper-diagonal entries of 
$\A$ were made. 
In practice, we have a sample and all the autocovariance matrices are estimated,
and the resulting $\A ,\B$ matrices are calculated with them. Usually a
statistical hypothesis testing advances this procedure, during which it can be
found that certain
partial correlations (closely related to the entries of $\K$) do 
not significantly differ from zero. Then we naturally want to introduce
zeros for the corresponding entries of $\A$. 
For this, the method of covariance 
selection of~\cite{Dempster} is elaborated, 
see also~\cite{Lauritzen}, \cite{Wermuth}, and Appendix~\ref{undirg}, from which
we recall the following notions here.

\begin{Definition}\label{RZPundir}
Let $\M$  be the adjacency matrix of an undirected graph on node set
labelled as $\{ 1,\dots ,d \}$; 
i.e. for $i\ne j$:  $m_{ij}=m_{ji}=1$ if $i\sim j$ ($i$ and $j$ are 
connected) and
0, otherwise (the diagonal entries are zeros). We say that $\M$ has a
\textbf{reducible zero pattern (RZP)} if $m_{ij}=0$ $(i<j)$ implies that
for each $h=1,\dots ,i-1$: either $m_{hi} =0$ or  $m_{hj} =0$ holds (or both
hold). (This applies to the entries above the main diagonal and 
symmetrically extends
to those below the main diagonal: $m_{ij} =m_{ji}$ for $j<i$.)
\end{Definition}

\begin{Definition}\label{RZPdir}
Let $\M$  be the adjacency matrix of a DAG in the topological ordering
$1,\dots ,d$ of its nodes; i.e. for $i<j$:  $m_{ij}=1$ if there is a
$j \to i$ edge and
0, otherwise ($\M$ is upper triangular with zero diagonal). 
We say that $\M$ has a
\textbf{reducible zero pattern (RZP)} if $m_{ij}=0$ $(i<j)$ implies that
for each $h=1,\dots ,i-1$: either $m_{hi} =0$ or  $m_{hj} =0$ holds (or both
hold).
\end{Definition}

\begin{Remark}\label{sink}
Obviously, in the adjacency matrix of a DAG, an
RZP is present if and only if there is no sink V configuration in the 
topological ordering of the DAG.
Under \textbf{sink V} configuration a triplet $j\rightarrow h \leftarrow i$ 
is understood, where $i$ is not connected by $j$ (see Figure~\ref{FATMA} of
Appendix~\ref{bn}). 
Indeed, in this case the DAG has a triplet
$h<i<j$ with $m_{hj} \neq 0$,
$m_{hi} \neq 0$, but $m_{ij} =0$,
in contrast to Definition~\ref{RZPdir}. 
\end{Remark}
Observe that the RZP applies only to the  positions of the zero entries of 
a matrix.  So we are able to give a more general definition.
\begin{Definition}
Let $\M$  be a symmetric or an upper triangular matrix of real entries. 
We say that $\M$ has a
\textbf{reducible zero pattern (RZP)} if $m_{ji}=0$ $(j<i)$ implies that
for each $h=1,\dots ,j-1$: either $m_{hj} =0$ or  $m_{hi} =0$ holds (or both
hold).
\end{Definition}
In view of this, we can find relation between the zeros of $\A$ in the
CVAR(1) model  and those of the inverse covariance matrix.
\begin{Proposition}
The upper triangular matrix  $\A$ of model equation~\eqref{mimo} has 
an RZP  if and only if the upper left
$d\times d$ block of $\K =\mathfrak{C}_2^{-1}$ has an RZP.
Moreover, the zero entries of $\A$ are exactly in the same positions as the
zero entries of the upper diagonal part of the upper left block of $\K$.
\end{Proposition}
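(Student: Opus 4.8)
The plan is to reduce the claim to a standard fact about the LDL (Cholesky) factorization of a symmetric positive definite matrix and the ``no fill-in'' characterization of perfect elimination orderings. First I would invoke Theorem~\ref{tetel1}: its block form~\eqref{fatma} shows that the upper left $d\times d$ block $\K_{11}$ of $\K=\mathfrak{C}_2^{-1}$ equals $\A^T\Dada^{-1}\A$ (equivalently, it is $\C^{-1}(1|0)$, as noted just before the theorem). Since $\A^T$ is unit lower triangular and $\Dada^{-1}$ is a positive diagonal matrix, $\A^T\Dada^{-1}(\A^T)^T$ is, by uniqueness of the LDL decomposition, precisely the LDL decomposition of $\K_{11}$. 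Hence the entries of $\A$ that vanish strictly above the diagonal sit in exactly the same positions as the vanishing entries, strictly below the diagonal, of the unit lower triangular LDL factor of $\K_{11}$. So it suffices to prove that the off-diagonal zero pattern of the LDL factor of $\K_{11}$ coincides with the off-diagonal zero pattern of $\K_{11}$ itself precisely when the latter is an RZP; the biconditional of the Proposition then follows, because whether a matrix ``has an RZP'' depends on its zero pattern alone.

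For the implication ``$\K_{11}$ has an RZP $\Rightarrow$ no fill-in'', I would argue by induction on the Gaussian-elimination steps. By the symmetric version of Definition~\ref{RZPdir} (see also Remark~\ref{sink}), $\K_{11}$ having an RZP means exactly that, for every index $h$, the neighbours of $h$ with larger index in the sparsity graph $G$ of $\K_{11}$ form a clique; that is, $1,\dots,d$ is a perfect elimination ordering of $G$. Eliminating the first pivot performs a rank-one update supported inside the clique of the former neighbours of vertex~$1$, so the sparsity graph of the resulting Schur complement is contained in $G$ restricted to $\{2,\dots,d\}$, which again carries $2,\dots,d$ as a perfect elimination ordering. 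Iterating, the $k$-th column of the LDL factor is supported in $\{k\}$ together with the $G$-neighbours of $k$ among $\{k+1,\dots,d\}$; in particular every factor entry whose position is not an edge of $G$ must vanish. Transcribed back through the identification above, this establishes one containment of the ``moreover'': every above-diagonal zero of $\K_{11}$ forces the corresponding entry of $\A$ to be zero.

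To obtain the reverse containment --- hence the full equality of zero patterns and, with it, the stated biconditional --- I would rule out accidental cancellation, namely that no entry $A_{ij}$ whose position $\{i,j\}$ is an edge of $G$ may vanish. Here I would call on the decomposable-model machinery of Appendix~\ref{appB}: when the zero pattern is an RZP, the associated undirected graph is triangulated with a junction-tree structure, and by covariance selection (\cite{Dempster}, \cite{Lauritzen}, \cite{Wermuth}) together with Corollary~\ref{moral} of Appendix~\ref{decomp}, the entries of $\A$ are, up to sign, conditional regression coefficients of the regular (full-rank) Gaussian vector $(\X_t^T,\X_{t-1}^T)^T$ computed within the cliques; on a nondegenerate decomposable model these are nonzero exactly at the directed edges, which coincide with the undirected edges precisely because an RZP forbids sink~V configurations (Remark~\ref{sink}). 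Thus the zero entries of $\A$ above the diagonal and the zero entries of the upper-diagonal part of $\K_{11}$ occupy the same positions, and since the RZP property is read off from the zero pattern alone, $\A$ has an RZP if and only if $\K_{11}$ does.

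The step I expect to be the genuine obstacle is precisely this no-cancellation argument. The combinatorial half --- ``RZP $\Rightarrow$ no fill-in'' --- follows from the elimination induction above, but a priori $(\A^T\Dada^{-1}\A)_{ij}=\delta_i^{-1}A_{ij}+\sum_{k<i}\delta_k^{-1}A_{ki}A_{kj}$ could vanish even when $A_{ij}\neq0$, so that $\A$ might ``have an RZP'' while $\K_{11}$ does not. It is the standing full-rank regularity of $\{\X_t\}$, together with the junction-tree decomposition of a decomposable Gaussian model, that excludes such cancellations and turns the correspondence into an honest equivalence; this is where that hypothesis does the decisive work.
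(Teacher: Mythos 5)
Your proof is correct and rests on the same two pillars as the paper's own argument: the identification, via Theorem~\ref{tetel1}, of the upper left block of $\K$ as $\A^T\Dada^{-1}\A$, i.e.\ as the unique LDL decomposition of $\C^{-1}(1|0)$, and the Markov equivalence of the DAG with its undirected skeleton in the decomposable case (Corollary~\ref{moral}). What differs is which inclusion each tool is asked to deliver, so the two proofs are mirror images of one another. The paper reads the recursion~\eqref{*} forwards: assuming $\A$ has an RZP, a zero $\ell_{ij}=a_{ji}=0$ forces $k_{ij}=0$ because the RZP annihilates every product $\ell_{ih}\ell_{jh}$ in the sum; the opposite inclusion ($k_{ij}=0\Rightarrow a_{ji}=0$) is then delegated entirely to Corollary~\ref{moral}. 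You instead obtain $k_{ij}=0\Rightarrow a_{ji}=0$ combinatorially, from the classical no-fill-in property of a perfect elimination ordering (which is exactly what the symmetric RZP of the block encodes), and delegate the opposite inclusion --- that $k_{ij}=\delta_i^{-1}a_{ij}+\sum_{h<i}\delta_h^{-1}a_{hi}a_{hj}$ cannot cancel to zero when $a_{ij}\neq 0$ --- to the same Markov-equivalence/partial-correlation machinery. Your version makes the perfect-elimination-ordering content of the RZP explicit, and your closing paragraph correctly isolates the no-cancellation step as the one that genuinely requires the statistical hypotheses (full-rank regularity and decomposability) rather than pure combinatorics; that is precisely the step the paper also settles only by appeal to Corollary~\ref{moral}. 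The one point to tidy is the logical bookkeeping of the reduction: your elimination induction is run under the hypothesis that $\K_{11}$ has the RZP, so for the converse direction ($\A$ has an RZP $\Rightarrow$ $\K_{11}$ does) you should run the displayed identity from your last paragraph under the RZP hypothesis on $\A$ instead --- which is exactly the paper's first direction, and which your formula already contains --- rather than inferring both directions of the biconditional from a single equivalence of zero patterns.
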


\begin{proof}
In one direction, the proof follows from Equation~\eqref{*} 
of Section~\ref{alg1} 
in Appendix~\ref{appA}. 
Indeed, in presence of an RZP in $\A$, for $1\le j<i \le d$:
if $l_{ij} (=a_{ji}) =0$, then $k_{ij}=0$, since either 
$\ell_{ih} (=a_{hi} )=0$ or
$\ell_{jh} (=a_{hj} )=0$ (or both) for $h=1,\dots, j-1$, which are the intrinsic
 entries of the summation in Equation~\eqref{*}.

In the other direction, if $k_{ij}=0$, then $l_{ij}=a_{ji}=0$ too, because of
the Markov equivalence of the DAG and its undirected skeleton in the 
decomposable case; see Corollary~\ref{moral} of Appendix~\ref{decomp}. 
Here we use
that  $k_{ij}$ and $l_{ij}$ are positive constant multiples of partial
correlation and partial regression coefficients.
\end{proof}

Consequently, if we have  causal relations between the contemporaneous
components of $\X_t$, and the so constructed DAG has an RZP, then this RZP 
is inherited by the left upper block of $\K$, which is
$\C^{-1} (1|0)$.
 Therefore, we further improve the 
covariance selection model~\cite{Dempster}, by introducing  zero entries
into the sample conditional covariance matrix. Actually, 
fixing the zero entries in the left upper block 
of $\K$, we re-estimate the matrix $\mathfrak{C_2}$.
In the possession of
a sample, there are exact MLE estimates developed for this purpose
(see~\cite{Bollacta},~\cite{Lauritzen}).


Assume that the cliques of the node set $\{ 1,\dots ,d \}$ of $\X_t$ are
$C_1 , \dots ,C_{k}$, to which a last clique  is added, formed 
by the components $X_{t-1,1}, \dots ,X_{t-1,d}$
of $\X_{t-1}$. If $C_1 , \dots ,C_{k}$ form a JT in this
 ordering, the joint density of $\X_t$ and $\X_{t-1}$ 
factorizes like
$$
 f(\x_t ,\x_{t-1} ) =f (\x_{t-1} ) \prod_{j=1}^{k} 
 f (\x_{t,R_{j}} \, | \, \x_{t,S_{j } }, \x_{t-1}  ) .
$$

For covariance selection, 
we include the lag 1 variables $X_{t-1,1}, \dots ,X_{t-1,d}$ too. Therefore,
the new cliques and separators are
$$
 C'_j := C_j \cup \{ X_{t-1,1}, \dots ,X_{t-1,d} \} , \quad j=1,\dots ,k
$$
and
$$
 S'_j := S_j \cup \{ X_{t-1,1}, \dots ,X_{t-1,d} \} , \quad j=2,\dots ,k .
$$

Having this, we are able to re-estimate the  $2d\times 2d$ $\K$, inverse of
$\mathfrak{C}_2$ in~\eqref{egyik}, for our VAR(1) model as follows:
\begin{equation}\label{khat}
 {\hat \K} = (n-1) \left\{ \sum_{j=1}^k [\SSS_{C'_j}^{-1}]^{2d } -
             \sum_{j=2}^k [\SSS_{S'_j}^{-1}]^{2d }  \right\} ,
\end{equation}
where the matrix $\SSS_{C'}$ is the usual product-moment
estimate based on the $n-1$ element  sample with the
following variables: 
$$
 \left( X_{t,i}: \, i\in C \quad \textrm{and} 
 \quad X_{t-1,1}, \dots ,X_{t-1,d} \right), \quad t=2,\dots ,n ;
$$
further,    $[\M_{C'} ]^{2d}$ denotes the $2d\times 2d$ 
matrix comprising 
the entries of the larger $2d \times 2d$ matrix 
$\M$ in the $|C'| \times |C'|$ block corresponding to $C'$, 
and otherwise zeros. 
By the properties of the 
LDL decomposition, these zeros go into zeros of $\A$.
For illustrations, see Section~\ref{appl} and~\ref{undirg}.

Note that for the MLE, here we do not have an i.i.d. sample, but a serially 
correlated sample. 
However, by ergodicity, for ``large'' $n$, this choice also works and
gives an asymptotic MLE, akin to the
product-moment estimates.

\subsection{Higher Order Causal VAR Models}\label{high}

The above model can be further generalized to the following recursive 
VAR($p$) model: for given integer $p\ge 1$,
\begin{equation}\label{mimop}
 \A\X_{t} + \B_1 \X_{t-1} + \dots + \B_{p} \X_{t-p } = \U_t , 
  \quad t=p+1, p+2 , \dots  ,
\end{equation}
where the white noise term $\U_t$ is uncorrelated with 
$\X_{t-1},\dots ,\X_{t-p}$, it has zero expectation and  
covariance matrix $\Dada =\diag (\delta_1 ,\dots ,\delta_d )$.
$\A$ is $d\times d$ upper triangular matrix with 1s along its main diagonal;
whereas, $\B_1 ,\dots ,\B_p$ are $d\times d$ matrices.
 
Here we have to perform the block Cholesky decomposition of the inverse
covariance matrix of $(\X_{t}^T ,\X_{t-1}^T ,\dots ,\X_{t-p}^T )^T$,
i.e. the inverse  of the matrix
\begin{equation}\label{eq:block_Toeplitz}
\mathfrak{C}_{p+1} = \left(
  \begin{array}{lllll}
    \bm{C}(0)  & \bm{C}^T(1) & \bm{C}^T(2) & \cdots & \bm{C}^T(p) \\
    \bm{C}(1) & \bm{C}(0) & \bm{C}^T(1) & \cdots & \bm{C}^T(p-1) \\
    \bm{C}(2) & \bm{C}(1) & \bm{C}(0) & \cdots & \bm{C}^T(p-2) \\
    \vdots & \vdots & \vdots & \ddots & \vdots \\
    \bm{C}(p) & \bm{C}(p-1) & \bm{C}(p-2) & \cdots & \bm{C}(0) \\
  \end{array}
\right) .
\end{equation}
This is a symmetric, positive definite block Toeplitz matrix with 
$(p+1)\times (p+1)$ blocks  which are $d\times d$ matrices. Again,
$\C^T (h)=\C (-h )$ and 
 it is well known that the inverse matrix 
$\mathfrak{C}_{p+1}^{-1}$ has the following block-matrix form:
\begin{itemize}
\item Upper left block: $\C^{-1} (p|0,\dots ,p-1)$;
\item Upper right block: $-\C^{-1}(p|0,\dots ,p-1)\C^T (1,\dots ,p)
  \mathfrak{C}_p^{-1} $;
\item Lower left block: $-\mathfrak{C}_p^{-1}\C (1,\dots ,p) 
   \C^{-1}(p|0,\dots ,p-1)$;
\item Lower right block: 
 $\mathfrak{C}_p^{-1}  +\mathfrak{C}_p ^{-1}  \C(1,\dots ,p) 
\C^{-1} (p|0,\dots ,p-1)  \C^T(1,\dots ,p) \mathfrak{C}_p^{-1} $, 
\end{itemize} 
where $\C (p|0,\dots ,p-1) = \C (0) - \C^T (1,\dots ,p) \mathfrak{C}_p^{-1} 
\C(1,\dots ,p)$ is the conditional
covariance matrix $\C (t|t-1,\dots ,t-p)$ of the distribution of $\X_t$ 
given  $\X_{t-1} ,\dots ,\X_{t-p}$; due to stationarity,
it does not depend on $t$ either, therefore it is denoted by 
$\C (p|0,\dots ,p-1 )$. Further, $\C^T (1,\dots ,p)=(\C^T (1),\dots ,\C^T (p))$
is $d \times pd$ and $\C (1,\dots ,p)$ is $pd \times d$.
Also,  $\mathfrak{C}_{p+1}$ is positive definite if and only if
both $\mathfrak{C}_p$ and  $\C (p|0,\dots ,p-1)$ are positive definite.

\begin{Theorem}\label{tetel2}
The parameter matrices $\A$, $\B_1 ,\dots ,\B_p$ and  $\Dada$ of
model Equation~(\ref{mimop}) can be obtained by the block LDL decomposition
of the (positive definite) concentration matrix $\K$  
(inverse of the covariance matrix $\mathfrak{C}_{p+1}$ in 
Equation~\eqref{eq:block_Toeplitz}) 
 of the $(p+1)d$-dimensional Gaussian random vector
 $(\X_{t}^T ,\X_{t-1}^T ,\dots ,\X_{t-p}^T )^T$.
If $\K =\LL \DD \LL^T$ is this (unique) decomposition  
with block-triangular matrix $\LL$ and block-diagonal matrix 
$\DD$, then they have the form
\begin{equation}\label{fatma1}
\LL = \begin{pmatrix}
  \A^T  &\OO_{d\times pd} \\
 \B^T  &\I_{pd\times pd}
 \end{pmatrix} , \qquad
\DD = \begin{pmatrix}
  \Dada^{-1}  &\OO_{d\times pd}  \\
  \OO_{pd\times d}  &\mathfrak{C}_{p}^{-1} 
 \end{pmatrix} ,
\end{equation}
where  the  $d\times d$ upper triangular matrix $\A$
with 1s along its main  diagonal, the 
 $d\times pd$ matrix $\B=(\B_1 \dots \B_p )$ (transpose of $\B^T$,
partitioned into blocks)
and the diagonal matrix $\Dada$ of
model Equation~(\ref{mimop}) can be retrieved from them.
\end{Theorem}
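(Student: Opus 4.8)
The plan is to follow the blueprint of the proof of Theorem~\ref{tetel1}, of which this statement is the order-$p$ analogue: I would rewrite the CVAR$(p)$ equations in a single stacked matrix identity whose coefficient matrix is built from $\A$ and $\B=(\B_1\ \dots\ \B_p)$, read off the factorization of $\K$ that this forces, and then invoke uniqueness of the block LDL decomposition to conclude that what I obtained is \emph{the} decomposition in the statement.

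First I would set $\B:=(\B_1\ \dots\ \B_p)$ (a $d\times pd$ matrix) and stack the process into $\Z_t:=(\X_t^T,\X_{t-1}^T,\dots,\X_{t-p}^T)^T$, whose covariance matrix is $\mathfrak{C}_{p+1}$ by weak stationarity. Then Equation~\eqref{mimop} becomes
\[
 \T\,\Z_t=\W_t,\qquad
 \T:=\begin{pmatrix}\A & \B\\ \OO_{pd\times d} & \I_{pd\times pd}\end{pmatrix},\qquad
 \W_t:=\bigl(\U_t^T,\ \X_{t-1}^T,\dots,\X_{t-p}^T\bigr)^T,
\]
with $\T$ invertible since $\A$ is unit upper triangular, so $|\A|=1$. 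Because $\U_t$ is uncorrelated with $\X_{t-1},\dots,\X_{t-p}$, the two blocks of $\W_t$ are uncorrelated; hence $\cov(\W_t)$ is block diagonal with blocks $\Dada$ and $\mathfrak{C}_p$ (the latter by stationarity), i.e.\ $\T\,\mathfrak{C}_{p+1}\,\T^T=\begin{pmatrix}\Dada & \OO\\ \OO & \mathfrak{C}_p\end{pmatrix}$. Inverting,
\[
 \K=\mathfrak{C}_{p+1}^{-1}
 =\T^T\begin{pmatrix}\Dada^{-1} & \OO\\ \OO & \mathfrak{C}_p^{-1}\end{pmatrix}\T
 =\begin{pmatrix}\A^T & \OO\\ \B^T & \I\end{pmatrix}
   \begin{pmatrix}\Dada^{-1} & \OO\\ \OO & \mathfrak{C}_p^{-1}\end{pmatrix}
   \begin{pmatrix}\A & \B\\ \OO & \I\end{pmatrix},
\]
which is precisely $\LL\DD\LL^T$ with the $\LL$ and $\DD$ of Equation~\eqref{fatma1}. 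Here $\LL$ is (entrywise) unit lower triangular — its leading block $\A^T$ is unit lower triangular and its trailing block is $\I_{pd}$ — and $\DD$ is block diagonal with diagonal leading block $\Dada^{-1}$; as a consistency check, the $(1,1)$ block of $\K$ comes out as $\A^T\Dada^{-1}\A=(\A^{-1}\Dada\A^{-T})^{-1}=\C^{-1}(p|0,\dots,p-1)$, agreeing with the formula for $\mathfrak{C}_{p+1}^{-1}$ recorded before the theorem.

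It remains to argue uniqueness and describe the retrieval. For a symmetric positive definite matrix the (scalar) LDL decomposition with unit lower triangular $\LL$ and positive diagonal $\DD$ is unique; coarsening it to the $(d,pd)$ partition — keeping the unit lower triangular leading block of $\LL$ and the diagonal leading block of $\DD$, replacing the trailing block of $\LL$ by $\I_{pd}$ and absorbing the corresponding Schur complement into the trailing block of $\DD$ — yields a unique block decomposition of the shape in~\eqref{fatma1}, exactly as for $p=1$ in the proof of Theorem~\ref{tetel1}. Hence the factorization exhibited above \emph{is} the one in the statement, and from it one recovers $\A$ as the transpose of the leading $d\times d$ block of $\LL$ (necessarily unit upper triangular), $\Dada$ as the inverse of the leading block of $\DD$ (necessarily diagonal and positive definite), and $\B=(\B_1\ \dots\ \B_p)$ as the transpose of the trailing $pd\times d$ block of $\LL$, then split into its $p$ successive $d\times d$ sub-blocks $\B_1,\dots,\B_p$.

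I expect no serious obstacle: the conceptual point — that the structural CVAR$(p)$ equation, once the lagged variables are stacked, \emph{is} a block LDL factorization of the concentration matrix — is immediate, and everything else is bookkeeping of block sizes. The one place to be careful is phrasing the uniqueness convention so that $\A$ and $\Dada$ are pinned down individually (not merely the product $\A^T\Dada^{-1}\A$); this is handled, as in Theorem~\ref{tetel1}, by insisting that the leading block of $\LL$ be unit triangular and the leading block of $\DD$ diagonal.
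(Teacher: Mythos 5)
Your proof is correct, and it reaches the factorization by a genuinely more direct route than the paper. The paper's proof works backwards: it takes generic $\A,\B,\Dada$ of the stated shapes, computes $(\LL\DD\LL^T)^{-1}$ via the block-inversion formula, and matches the four blocks against $\mathfrak{C}_{p+1}$, showing along the way that the matching is equivalent to the model equations; this produces the explicit Yule--Walker-type identities $\B=-\A\,\C^T(1,\dots,p)\,\mathfrak{C}_p^{-1}$ and $\Dada=\A\C(0)\A^T-\B\,\mathfrak{C}_p\B^T$ as by-products, which the paper then reuses. You instead go forward: writing the model as $\T\Z_t=\W_t$ and observing that the orthogonality of $\U_t$ to the lagged variables makes $\cov(\W_t)=\T\,\mathfrak{C}_{p+1}\T^T$ block diagonal with blocks $\Dada$ and $\mathfrak{C}_p$, so that $\K=\T^T\,\mathrm{blockdiag}(\Dada^{-1},\mathfrak{C}_p^{-1})\,\T$ is immediate and the block-inversion computation is avoided entirely. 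Both arguments then lean on the same uniqueness statement, and you correctly identify the one delicate point: the relevant partition is into $d$ singleton blocks followed by one $pd\times pd$ block (not merely the coarse $(d,pd)$ partition, under which the leading block of $\LL$ would be forced to be $\I_d$ and the leading block of $\DD$ would be the Schur complement $\C^{-1}(p|0,\dots,p-1)$ rather than $\Dada^{-1}$); your prescription of keeping the leading block of $\LL$ unit triangular and the leading block of $\DD$ diagonal amounts exactly to that finer partition. Your consistency check that the $(1,1)$ block of $\K$ equals $\A^T\Dada^{-1}\A=\C^{-1}(p|0,\dots,p-1)$ is a nice touch that the paper states only informally before the theorem.
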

The proof of this theorem together with the detailed description of the
algorithm is to be found in Sections~\ref{biz2} and~\ref{alg2} of 
Appendix~\ref{appA}.

Restricted cases can be treated similarly as in Section~\ref{incomplete}.
Here too, the existence of an RZP in the DAG on $d$ nodes is equivalent
to the  existence of an RZP in the left upper $d\times d$ corner of the
concentration matrix $\mathfrak{C}_{p+1}^{-1}$. From the model equations
it is obvious that
$$
 X_{t,i} = -\sum_{j=i+1}^d a_{ij} X_{t,j} - \sum_{h=1}^p \sum_{j=1}^d 
  b_{h,ij} X_{t-h, j} -U_{t,i} , 
$$
where $X_{t,i}$ is the $i$th coordinate of $\X_t$.
By weak stationarity it follows that the entries of the matrices
$\A$ and $\B_h =(b_{h,ij} )_{i,j=1}^{p}$ are partial regression coefficients
as follows:
$$
\begin{aligned}
 a_{ij} &= -\beta_{X_{t,i} X_{t,j} \cdot \{ X_{t,i+1} ,\dots ,X_{t,d} ,
          X_{t-1,1} ,\dots ,X_{t-1,d} , \dots , X_{t-p,1} ,\dots ,X_{t-p,d}\} }, 
 \quad 1 \le i <j \le d; \\
  b_{h,ij} &= -\beta_{X_{t,i} X_{t+h, j} \cdot \{ X_{t,i+1} ,\dots ,X_{t,d} ,
     X_{t-1,1} ,\dots ,X_{t-1,d} , \dots ,
    X_{t-p,1} ,\dots ,X_{t-p,d}  \} }, \\
   &1 \le i <j \le d, \quad h=1,\dots ,d.
\end{aligned}
$$
Since the conditioning set changes from equation to equation, it is
easier to use the block LDL decompositions here, without the exact meaning
of the coefficients.

Considering the components of $\X_t ,\X_{t-1} ,\dots ,\X_{t-p}$ as 
nodes of the expanded graph,
the joint density of $\X_t ,\X_{t-1} ,\dots ,\X_{t-p}$ factorizes like
$$
\begin{aligned}
 f(\x_t ,\x_{t-1} , \dots ,\x_{t-p} ) &= f(\x_{t-1} , \dots ,\x_{t-p} )
             f(\x_t \, | \, \x_{t-1} , \dots ,\x_{t-p} ) \\
&=f(\x_{t-1} , \dots ,\x_{t-p} ) \cdot
 \prod_{i=1}^d f(x_{t,i} \, | \, \x_{t,\pa (i) }, \x_{t-1}, \dots ,\x_{t-p} ) .
\end{aligned}
$$

Now assume that the cliques of the node set $\{ 1,\dots ,d \}$ of $\X_t$ are
$C_1 , \dots ,C_{k }$ (as at the end of Section~\ref{incomplete})
and they form a
JT with residuals $R_1 , \dots , R_{k}$ and separators
$S_1 , \dots , S_{k }$ (with the understanding that $S_1 =\emptyset$ and 
$R_1 =C_1$).
Enhancing the preceding density with this, we get the following factorization:
$$
 f(\x_t ,\x_{t-1} , \dots ,\x_{t-p} ) =f(\x_{t-1} , \dots ,\x_{t-p} )  \cdot
 \prod_{j=1}^{k } 
 f (\x_{t, R_{j}} \, | \, \x_{t, S_{j } }, \x_{t-1} , \dots , \x_{t-p} ) .
$$

Covariance selection can be done  similarly as in Section~\ref{incomplete},
but here zero entries of the left upper $d\times d$ block of 
$\mathfrak{C}_{p+1}^{-1}$ provide the zero  entries of $\A$.
For this purpose, the
 $n-p$ element  sample  entries are used with the following coordinates: 
$$
 \left( X_{t,i}: \, i\in C \quad \textrm{and} \quad X_{t-1,1}, \dots ,
X_{t-1,d},   \dots , X_{t-p,1}, \dots ,X_{t-p,d} \right),
$$
for $t=p+1,\dots ,n$ 
when we calculate the product-moment estimate $\SSS_{C'}$ with
$C' =C \cup \{  \X_{t-1} ,\dots ,\X_{t-p} \}$.
For more details see the explanation after Equation~\eqref{khat} and 
Section~\ref{appl}. 

Note that here the covariance selection is done based only on a serially
correlated and not an independent sample. However, when $n$ is ``large'', then
ergodicity issues (see, e.g.~\cite{Szabados}) give rise to this 
relaxation of the original algorithm.
Also, by the theory of~\cite{Brockwell} (p. 424), 
it is guaranteed that the Yule-Walker equations have a 
stable
stationary solution to the VAR$(p)$ model, whenever the starting  covariance
matrix of $(p+1)\times (p+1)$ blocks is positive definite. But we assume this
in our theorems. In this case, the empirical versions are also positive
definite (almost surely as $n\to\infty$), and the covariance selection also
gives a positive definite estimate. So the estimated parameter matrices 
provide a stable VAR model in view of the theory and ergodicity 
if $n$ is ``large''.

\section{Applications with Order Selection}\label{appl}


\subsection{Financial Data} 

We used the data communicated in the  the paper~\cite{Akbilgic} on
daily relative returns of 8 different asset prices, spanning 534 days.
The multivariate time series was found stationary and nearly Gaussian.

First we applied the unrestricted CVAR($p$) model.
We constructed a DAG by making the undirected graph on 8 nodes
directed. The undirected graph was constructed by testing statistical
hypotheses for the partial correlations of the pairs of the variables
conditioned on all the others. As the test statistic is increasing in the
absolute value of the partial correlation in question, a threshold 0.04
for the latter one was used that corresponds to significance level 
$\alpha =0.008851$ of the partial correlation test.
Table~\ref{tab:part_cor_C(0)}   contains  the
partial correlations based on $\C^{-1} (0)$.

\begin{table}[ht]
	\centering
	\begin{tabular}{r c c c c c c c c}
		&\textbf{NIK} & \textbf{EU} & \textbf{ISE} & \textbf{EM} & \textbf{BVSP} & \textbf{DAX} & \textbf{FTSE} & \textbf{SP} \\ 
		\textbf{NIK} &  & 0.016$^{*}$ & 0.035$^{*}$ & 0.522 & -0.260 & -0.019$^{*}$ & -0.076 & 0.024$^{*}$ \\ 
		\textbf{EU} & 0.016$^{*}$ &	& 0.217 & 0.034$^{*}$ & 0.067 & 0.687 & 0.747 &	0.018$^{*}$ \\ 
		\textbf{ISE} & 0.035$^{*}$ & 0.217 & & 0.358 & -0.157 &	-0.077 & -0.059 & 0.034$^{*}$ \\ 
		\textbf{EM} & 0.522 & 0.034$^{*}$ & 0.358 & & 0.546 & 0.048 & 0.086 & -0.184 \\ 
		\textbf{BVSP} & -0.260 & 0.067 & -0.157 & 0.546 & &	-0.093 & -0.045 & 0.533 \\ 
		\textbf{DAX} &
	-0.019$^{*}$ &
		0.687 &
		-0.077 &
		0.048 &
		-0.093 &
		&
		-0.203 &
		0.191 \\ 
		\textbf{FTSE} &
		-0.076 &
		0.747 &
		-0.059 &
		0.086 &
		-0.045 &
		-0.203 &
		&
		0.057 \\ 
		\textbf{SP} &
	0.024$^{*}$ &
	0.018$^{*}$ &
	0.034$^{*}$ &
		-0.184 &
		0.533 &
		0.191 &
		0.057 & 		\\ 
	\end{tabular}
	\caption{Partial Correlation Coefficients from $\C^{-1} (0)$.
Entries marked by asterisk are less than 0.04 in absolute value (i.e.
 they correspond to no-edge positions in the graph), the corresponding
significance is $\alpha =0.008851$.}
	\label{tab:part_cor_C(0)}
\end{table}

Since the graph was triangulated, 
with the MCS algorithm, we were able
to (not necessarily uniquely) label the nodes so that the adjacency
matrix of this undirected graph had an RZP: 
$$
\begin{aligned}
 &1: \, \textrm{NIK (stock market return index of Japan)},  \\
 &2: \, \textrm{EU (MSCI European index)}, \\
 &3: \, \textrm{ISE (Istanbul stock exchange national 100 index)} \\
 &4: \, \textrm{EM (MSCI emerging markets index)}, \\
 &5: \, \textrm{BVSP (stock market return index of Brazil)}, \\
 &6: \, \textrm{DAX (stock market return index of Germany)},  \\
 &7: \, \textrm{FTSE (stock market return index of UK)},  \\
 &8: \, \textrm{SP (Standard \& poor's 500 return index)}.
\end{aligned}
$$

\begin{figure}
	\centering
	\subfloat[MRF fitted to the financial dataset.]{\label{fig:undir}
		\includegraphics[width=.48\linewidth]{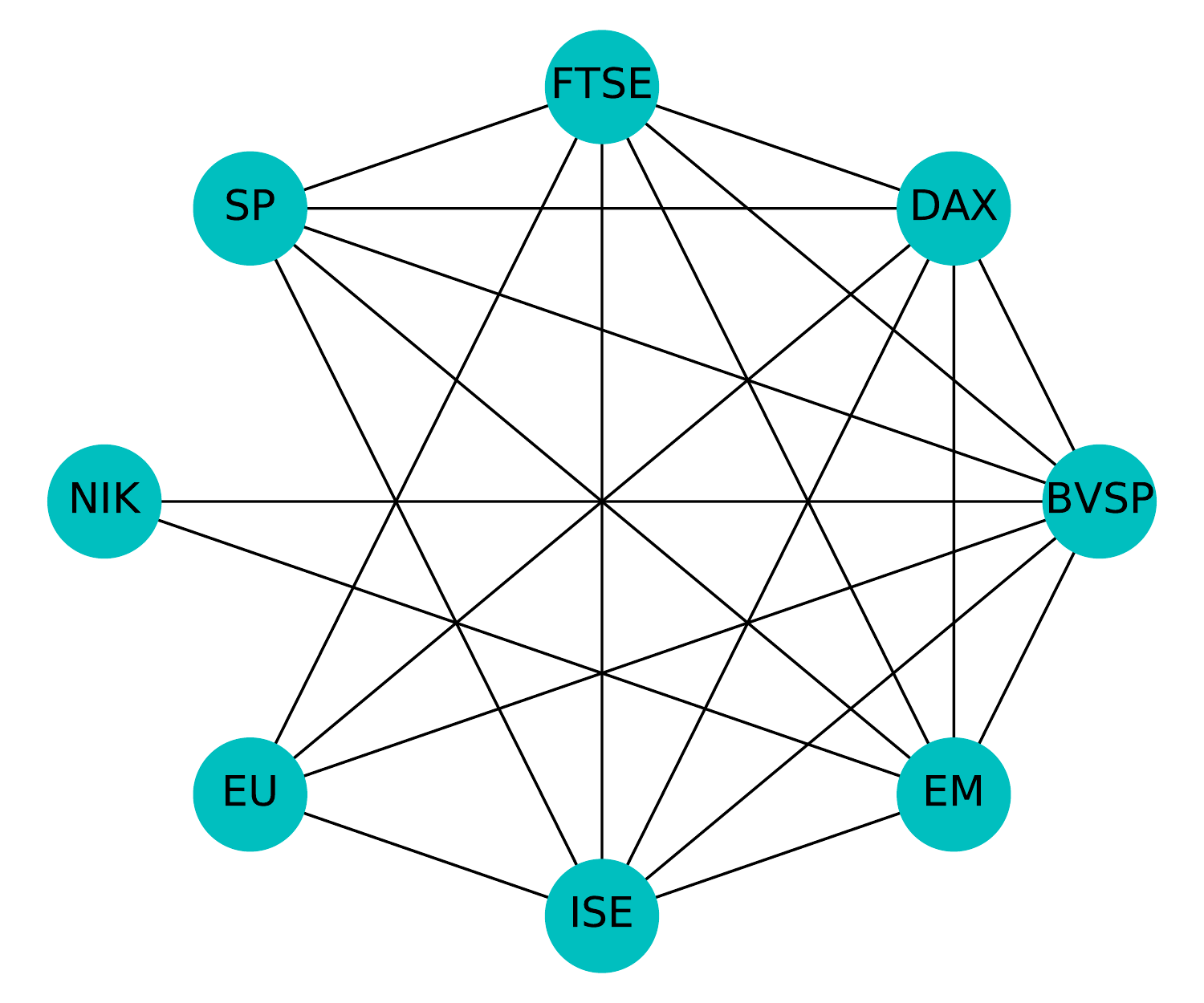}}
	\hfill
	\subfloat[DAG oriented in the MCS ordering.]{\label{fig:dir}
		\includegraphics[width=.48\linewidth]{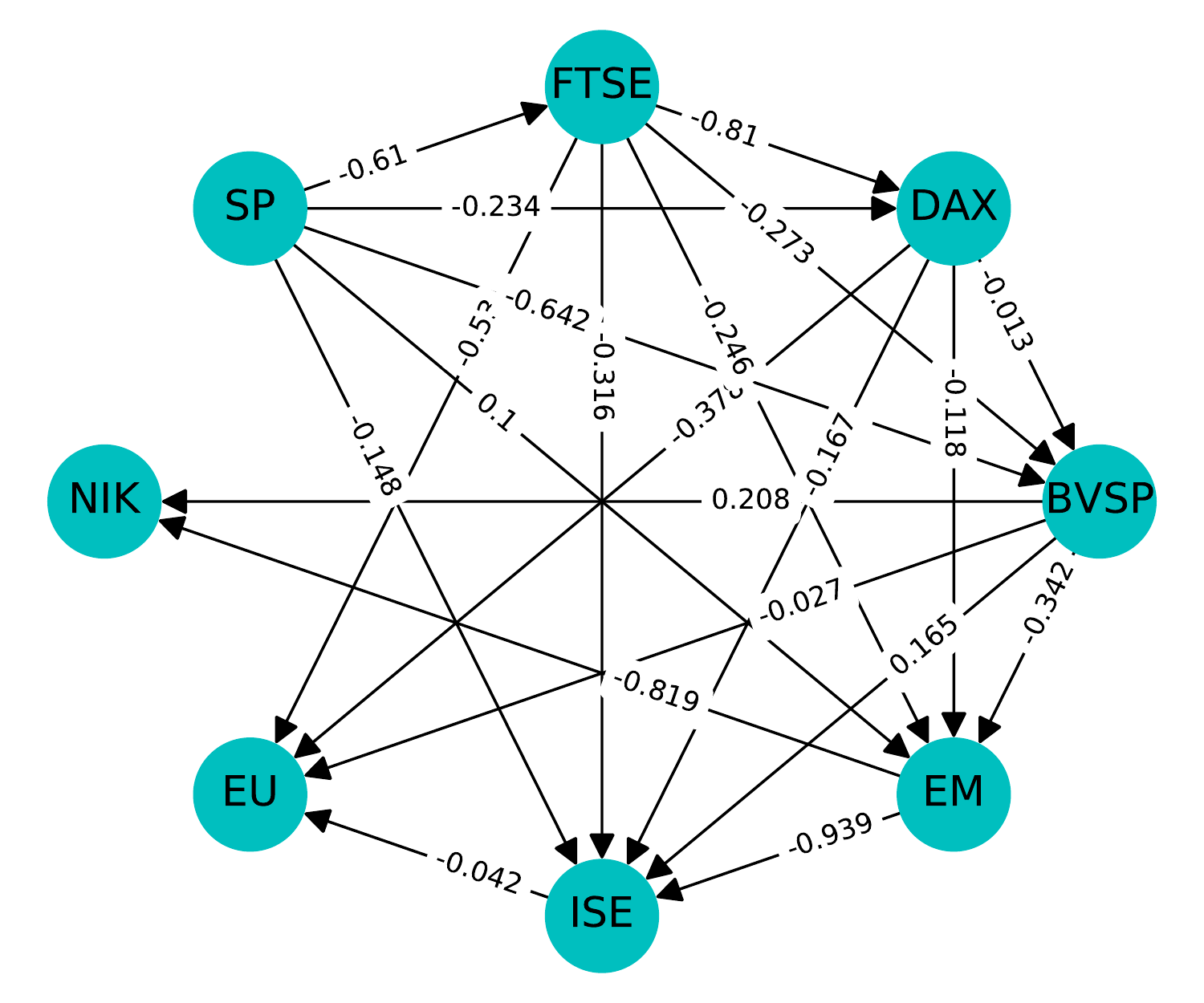}}
\caption{Graphical models fitted to the financial dataset}
\end{figure}

If this is considered as the topological labeling of the DAG, where directed
edges point from a higher label node to a lower label one, then the
so obtained directed graph is Markov equivalent to its undirected
skeleton, see Figures~\ref{fig:undir} and~\ref{fig:dir}. 
But the RZP is used only in the restricted case, in the unrestricted case
only the DAG ordering of the variables is used.


We ran the VAR($p$) algorithm with $p=1,2,3,4,5$ and
found that the $\A$ matrices do not change much with increasing $p$, akin
to $\B_1$. The $\B_2 ,..,\B_5$ matrices have relatively ``small'' entries.
Consequently, contemporaneous effects and one-day lags are the most important.
This is also supported by the forthcoming order selection investigations.
For the $p=1$ and $p=2$ cases, 
see Tables~\ref{tab:A_Fin_VAR1},\ref{tab:B_Fin_VAR1} 
and~\ref{tab:A_Fin_VAR2},\ref{tab:B1_Fin_VAR2},\ref{tab:B2_Fin_VAR2},
respectively. 

\begin{table}[ht]
	\centering
	\scriptsize
	\begin{tabular}{r c c c c c c c c }
		\textbf{} & \textbf{NIK} & \textbf{EU} & \textbf{ISE} & \textbf{EM} & \textbf{BVSP} & \textbf{DAX} & \textbf{FTSE} & \textbf{SP} \\ 
		\textbf{NIK}  & 1 & 0.0264 & 0.0042  & -0.8902 & 0.2030  & 0.0170  & 0.0781  & -0.0336 \\ 
		\textbf{EU}   & 0 & 1      & -0.0418 & -0.0146 & -0.0239 & -0.3746 & -0.5255 & -0.0033 \\ 
		\textbf{ISE}  & 0 & 0      & 1       & -0.9518 & 0.1613  & -0.1658 & -0.3129 & -0.1413 \\ 
		\textbf{EM}   & 0 & 0      & 0       & 1       & -0.3507 & -0.1182 & -0.2464 & 0.1077  \\ 
		\textbf{BVSP} & 0 & 0      & 0       & 0       & 1       & -0.0129 & -0.2782 & -0.6375 \\ 
		\textbf{DAX}  & 0 & 0      & 0       & 0       & 0       & 1       & -0.8102 & -0.2336 \\ 
		\textbf{FTSE} & 0 & 0      & 0       & 0       & 0       & 0       & 1       & -0.6100 \\ 
		\textbf{SP}   & 0 & 0      & 0       & 0       & 0       & 0       & 0       & 1       \\ 
	\end{tabular}
	\caption{$\bm{A}$ for the unrestricted Financial VAR(1) model (rounded to 4 decimals).}
	\label{tab:A_Fin_VAR1}
\end{table}

\begin{table}[ht]
	\centering
	\scriptsize
	\begin{tabular}{r c c c c c c c c}
		\textbf{} &
		\textbf{NIK$_{-1}$} &
		\textbf{EU$_{-1}$} &
		\textbf{ISE$_{-1}$} &
		\textbf{EM$_{-1}$} &
		\textbf{BVSP$_{-1}$} &
		\textbf{DAX$_{-1}$} &
		\textbf{FTSE$_{-1}$} &
		\textbf{SP$_{-1}$} \\ 
		\textbf{NIK}  & 0.1845  & -0.1685 & -0.0874 & 0.0852  & 0.0635  & 0.0205  & -0.1236 & -0.2798 \\ 
		\textbf{EU}   & -0.0131 & 0.1219  & -0.0044 & 0.0291  & -0.0124 & -0.0393 & -0.0979 & 0.0011  \\ 
		\textbf{ISE}  & 0.0677  & 0.2811  & -0.0657 & 0.2473  & -0.2940 & -0.0543 & 0.0098  & -0.1442 \\ 
		\textbf{EM}   & -0.0016 & -0.0569 & -0.0159 & 0.1076  & -0.0917 & -0.0945 & 0.0875  & -0.1071 \\ 
		\textbf{BVSP} & -0.0140 & 0.0704  & 0.0142  & -0.1046 & 0.1397  & -0.1497 & 0.1188  & -0.0812 \\ 
		\textbf{DAX}  & -0.0034 & 0.2021  & -0.0342 & -0.0044 & -0.0352 & -0.0476 & -0.0670 & -0.0673 \\ 
		\textbf{FTSE} & 0.0293  & -0.0168 & -0.0109 & 0.0420  & -0.1129 & 0.2141  & 0.0805  & -0.2641 \\ 
		\textbf{SP}   & 0.0417  & 0.2603  & -0.0261 & 0.0112  & -0.0026 & -0.0709 & -0.2850 & 0.1240  \\ 
	\end{tabular}
\caption{$\bm{B}$ for the unrestricted Financial VAR(1) model 
(rounded to 4 decimals).}
	\label{tab:B_Fin_VAR1}
\end{table}

\begin{table}[ht]
	\centering
	\scriptsize
	\begin{tabular}{r c c c c c c c c}
		\textbf{} & \textbf{NIK} & \textbf{EU} & \textbf{ISE} & \textbf{EM} & \textbf{BVSP} & \textbf{DAX} & \textbf{FTSE} & \textbf{SP} \\ 
		\textbf{NIK}  & 1 & -0.0114 & 0.0103  & -0.8822 & 0.1995  & 0.0233  & 0.0856  & -0.0214 \\ 
		\textbf{EU}   & 0 & 1       & -0.0426 & -0.0110 & -0.0240 & -0.3745 & -0.5137 & -0.0128 \\ 
		\textbf{ISE}  & 0 & 0       & 1       & -0.9788 & 0.1701  & -0.1669 & -0.3139 & -0.1361 \\ 
		\textbf{EM}   & 0 & 0       & 0       & 1       & -0.3450 & -0.1154 & -0.2375 & 0.0922  \\ 
		\textbf{BVSP} & 0 & 0       & 0       & 0       & 1       & -0.0047 & -0.2655 & -0.6601 \\ 
		\textbf{DAX}  & 0 & 0       & 0       & 0       & 0       & 1       & -0.8120 & -0.2339 \\ 
		\textbf{FTSE} & 0 & 0       & 0       & 0       & 0       & 0       & 1       & -0.6320 \\ 
		\textbf{SP}   & 0 & 0       & 0       & 0       & 0       & 0       & 0       & 1       \\ 
	\end{tabular}
\caption{$\bm{A}$ for the unrestricted Financial VAR(2) model 
(rounded to 4 decimals).}
	\label{tab:A_Fin_VAR2}
\end{table}

\begin{table}[ht]
	\centering
	\scriptsize
	\begin{tabular}{r c c c c c c c c}
		\textbf{} &
		\textbf{NIK$_{-1}$} &
		\textbf{EU$_{-1}$} &
		\textbf{ISE$_{-1}$} &
		\textbf{EM$_{-1}$} &
		\textbf{BVSP$_{-1}$} &
		\textbf{DAX$_{-1}$} &
		\textbf{FTSE$_{-1}$} &
		\textbf{SP$_{-1}$} \\ 
		\textbf{NIK}  & 0.2063  & -0.1826 & -0.1106 & 0.1063  & 0.0731  & 0.0187  & -0.1502 & -0.2580 \\ 
		\textbf{EU}   & -0.0037 & 0.1364  & -0.0010 & 0.0232  & -0.0150 & -0.0371 & -0.0996 & -0.0107 \\ 
		\textbf{ISE}  & 0.0409  & 0.2476  & -0.0771 & 0.2274  & -0.2772 & -0.0447 & 0.0331  & -0.1284 \\ 
		\textbf{EM}   & 0.0489  & -0.0200 & -0.0030 & 0.1360  & -0.1150 & -0.0996 & 0.0468  & -0.1162 \\ 
		\textbf{BVSP} & -0.0066 & 0.0931  & 0.0261  & -0.1091 & 0.1312  & -0.1573 & 0.1161  & -0.0935 \\ 
		\textbf{DAX}  & -0.0123 & 0.2146  & -0.0319 & 0.0073  & -0.0406 & -0.0536 & -0.0727 & -0.0694 \\ 
		\textbf{FTSE} & 0.0852  & 0.0019  & 0.0275  & 0.0145  & -0.1117 & 0.2377  & 0.1035  & -0.3427 \\ 
		\textbf{SP}   & 0.0530  & 0.2759  & -0.0565 & -0.0033 & 0.0024  & -0.0945 & -0.3106 & 0.1789  \\ 
	\end{tabular}
\caption{$\bm{B_1}$ for the unrestricted Financial VAR(2) model 
(rounded to 4 decimals).}
	\label{tab:B1_Fin_VAR2}
\end{table}

\begin{table}[ht]
	\centering
	\scriptsize
	\begin{tabular}{ r c c c c c c c c c }
		\textbf{} &
		\textbf{NIK$_{-2}$} &
		\textbf{EU$_{-2}$} &
		\textbf{ISE$_{-2}$} &
		\textbf{EM$_{-2}$} &
		\textbf{BVSP$_{-2}$} &
		\textbf{DAX$_{-2}$} &
		\textbf{FTSE$_{-2}$} &
		\textbf{SP$_{-2}$} \\ 
		\textbf{NIK}  & -0.0402 & -0.1695 & -0.0410 & 0.0156  & 0.0998  & -0.0406 & 0.1367  & -0.0091 \\ 
		\textbf{EU}   & 0.0017  & 0.0771  & -0.0065 & 0.0054  & 0.0037  & 0.0192  & -0.0762 & -0.0394 \\ 
		\textbf{ISE}  & -0.0142 & -0.1725 & -0.0276 & -0.0088 & 0.0389  & 0.1167  & 0.0826  & 0.0357  \\ 
		\textbf{EM}   & -0.0054 & 0.0650  & -0.0322 & 0.1155  & -0.0695 & -0.0959 & -0.0162 & -0.0270 \\ 
		\textbf{BVSP} & -0.0423 & 0.0332  & -0.0449 & 0.2878  & -0.0717 & -0.0221 & -0.0381 & -0.0120 \\ 
		\textbf{DAX}  & -0.0372 & 0.0177  & 0.0130  & 0.0658  & -0.0360 & -0.0108 & -0.0202 & 0.0059  \\ 
		\textbf{FTSE} & 0.0491  & 0.3107  & -0.0820 & 0.0693  & 0.0299  & 0.0153  & -0.0840 & -0.3038 \\ 
		\textbf{SP}   & 0.0447  & -0.0628 & 0.0804  & -0.1824 & 0.0785  & 0.0133  & -0.1775 & 0.1284  \\ 
	\end{tabular}
\caption{$\bm{B_2}$ for the unrestricted Financial VAR(2) model 
(rounded to 4 decimals).}
	\label{tab:B2_Fin_VAR2}
\end{table}

Then we considered the restricted CVAR(1) model.
Here we want to introduce structural zeros into the matrix $\A$. 
Now the matrix $C^{-1}(1|0)$, the left upper $8\times 8$ corner 
of $\mathfrak{C}_2^{-1}$ is used for covariance selection.
Figure~\ref{fig:dir} 
shows this DAG with the significant path coefficients  above the arrows,
based on Table~\ref{tab:A_rst_Fin_VAR1}.

The ordering of the variables is the same as in the unrestricted case,
but the RZP is a bit different.
The decomposable structure has the following cliques and separators:
\begin{equation}\label{klikkek}
\begin{aligned}
C_1 &= \{ \textrm{BVSP,DAX,EM,FTSE,ISE,SP} \} =\{ 3,4,5,6,7,8 \} \\
C_2 &= \{ \textrm{BVSP,DAX,EU,FTSE,ISE} \}=\{ 2,3,5,6,7 \} \\
C_3 &= \{ \textrm{BVSP,EM,NIK} \} =\{ 1,4,5 \}  \\
S_2 &=  \{ \textrm{BVSP,DAX,FTSE,ISE} \} =\{ 3,5,6,7 \} \\
S_3 &= \{ \textrm{BVSP,EM} \} =\{ 4,5 \}   ,
\end{aligned}
\end{equation}
where the parent clique of both $C_2$ and $C_3$ is $C_1$.
Note, that  the  set of nodes in the second braces is the same, but they 
follow increasing labels so that better see the JT structure. 
variables too; 

The matrices $\A$ and $\B$ were estimated via the algorithm for the LDL
decomposition of $\hat \K$. Here the zeros of the left upper $8\times 8$ block
of $\hat \K$ will necessarily result in the zeros of $\A$ in the same positions.
The upper-diagonal entries of $\A$ and the entries of $\B$ are
considered as path coefficients which represent the contemporaneous and 1-day
lagged effect of the assets to the others, respectively; see 
Table~\ref{tab:A_rst_Fin_VAR1} and Table~\ref{tab:B_rst_Fin_VAR1}.

\begin{table}[ht]
	\scriptsize
	\centering
	\begin{tabular}{r c c c c c c c c}
		&
		\textbf{NIK} &
		\textbf{EU} &
		\textbf{ISE} &
		\textbf{EM} &
		\textbf{BVSP} &
		\textbf{DAX} &
		\textbf{FTSE} &
		\textbf{SP} \\ 
		\textbf{NIK} &
		1 &
	0&
	0&
		-0.8193 &
		0.2080 &
	0&
	0&
	0 \\ 
		\textbf{EU} &
		0 &
		1 &
		-0.0421 &
		0&
		-0.0269 &
		-0.3782 &
		-0.5297 &
		0 \\ 
		\textbf{ISE}  & 0 & 0 & 1 & -0.9386 & 0.1653  & -0.1675 & -0.3161 & -0.1477 \\ 
		\textbf{EM}   & 0 & 0 & 0 & 1       & -0.3419 & -0.1184 & -0.2464 & 0.0997  \\ 
		\textbf{BVSP} & 0 & 0 & 0 & 0       & 1       & -0.0130 & -0.2729 & -0.6423 \\ 
		\textbf{DAX}  & 0 & 0 & 0 & 0       & 0       & 1       & -0.8102 & -0.2336 \\ 
		\textbf{FTSE} & 0 & 0 & 0 & 0       & 0       & 0       & 1       & -0.6104 \\ 
		\textbf{SP}   & 0 & 0 & 0 & 0       & 0       & 0       & 0       & 1       \\ 
	\end{tabular}
	\caption{
		$\bm{A}$ matrix for the restricted Financial VAR(1) model (rounded to 4 decimal places).
	}
	\label{tab:A_rst_Fin_VAR1}
\end{table}

\begin{table}[ht]
	\scriptsize
	\centering
	\begin{tabular}{r c c c c c c c c}
		\textbf{} & \textbf{NIK$_{-1}$} & \textbf{EU$_{-1}$} & \textbf{ISE$_{-1}$} & \textbf{EM$_{-1}$} & \textbf{BVSP$_{-1}$} & \textbf{DAX$_{-1}$} & \textbf{FTSE$_{-1}$} & \textbf{SP$_{-1}$} \\ 
		\textbf{NIK}  & 0.1811  & -0.1797 & -0.0856 & 0.0842  & 0.0739  & -0.0058 & -0.1146 & -0.2662 \\ 
		\textbf{EU}   & -0.0131 & 0.1213  & -0.0046 & 0.0304  & -0.0130 & -0.0415 & -0.0969 & 0.0002  \\ 
		\textbf{ISE}  & 0.0676  & 0.2814  & -0.0658 & 0.2483  & -0.2941 & -0.0567 & 0.0120  & -0.1472 \\ 
		\textbf{EM}   & -0.0016 & -0.0567 & -0.0158 & 0.1067  & -0.0908 & -0.0951 & 0.0890  & -0.1085 \\ 
		\textbf{BVSP} & -0.0139 & 0.0704  & 0.0142  & -0.1041 & 0.1391  & -0.1488 & 0.1195  & -0.0828 \\ 
		\textbf{DAX}  & -0.0034 & 0.2019  & -0.0342 & -0.0046 & -0.0353 & -0.0474 & -0.0669 & -0.0672 \\ 
		\textbf{FTSE} & 0.0292  & -0.0171 & -0.0109 & 0.0419  & -0.1130 & 0.2142  & 0.0807  & -0.2642 \\ 
		\textbf{SP}   & 0.0417  & 0.2608  & -0.0261 & 0.0115  & -0.0026 & -0.0713 & -0.2853 & 0.1239  \\ 
	\end{tabular}
	\caption{
		$\bm{B}$ matrix for the restricted Financial VAR(1) model (rounded to 4 decimal places).
	}
	\label{tab:B_rst_Fin_VAR1}
\end{table}

In the VAR(2) situation, the graph, constructed by $C^{-1}(2|1,0)$ is the same,
 has the same
JT with 3 cliques and the same RZP as based on $C^{-1}(1|0)$. 
It is in accord with our former observation that the
lag 2 or more days effects of the assets to the others is negligible
compared to the 1-day lag effect (the forthcoming order selection 
also supports this). 

Here the $24\times 24$ matrix $\hat \K$ was estimated by adapting 
Equation~\eqref{khat} to the $3d\times 3d$ situation,
by using both the lag 1 and lag 2 variables for covariance selection.
The estimated $\A,\B_1$, and $\B_2$ matrices are shown in 
Tables~\ref{tab:A_rst_Fin_VAR2}, \ref{tab:B1_rst_Fin_VAR2},
and~\ref{tab:B2_rst_Fin_VAR2}.

\begin{table}[ht]
	\centering
	\scriptsize
	\begin{tabular}{r cccccccc}
		\textbf{} &
		\textbf{NIK} &
		\textbf{EU} &
		\textbf{ISE} &
		\textbf{EM} &
		\textbf{BVSP} &
		\textbf{DAX} &
		\textbf{FTSE} &
		\textbf{SP} \\ 
		\textbf{NIK} &
		1 &
	0&
	0&
		-0.8191 &
		0.2076 &
0&
0&
0 \\
		\textbf{EU} &
		0 &
		1 &
		-0.0423 &
	0&
		-0.0293 &
		-0.3811 &
		-0.5192 &
	0 \\
		\textbf{ISE}  & 0 & 0 & 1 & -0.9662 & 0.1790  & -0.1713 & -0.3112 & -0.1470 \\ 
		\textbf{EM}   & 0 & 0 & 0 & 1       & -0.3361 & -0.1153 & -0.2372 & 0.0835  \\ 
		\textbf{BVSP} & 0 & 0 & 0 & 0       & 1       & -0.0069 & -0.2544 & -0.6664 \\ 
		\textbf{DAX}  & 0 & 0 & 0 & 0       & 0       & 1       & -0.8128 & -0.2336 \\ 
		\textbf{FTSE} & 0 & 0 & 0 & 0       & 0       & 0       & 1       & -0.6319 \\ 
		\textbf{SP}   & 0 & 0 & 0 & 0       & 0       & 0       & 0       & 1       \\ 
	\end{tabular}
	\caption{$\bm{A}$ matrix for the restricted Financial VAR(2) model 
(rounded to 4 decimals).}
	\label{tab:A_rst_Fin_VAR2}
\end{table}

\begin{table}[ht]
	\centering
	\scriptsize
	\begin{tabular}{rcccccccc}
		\textbf{} & \textbf{NIK$_{-1}$} & \textbf{EU$_{-1}$} & \textbf{ISE$_{-1}$} & \textbf{EM$_{-1}$} & \textbf{BVSP$_{-1}$} & \textbf{DAX$_{-1}$} & \textbf{FTSE$_{-1}$} & \textbf{SP$_{-1}$} \\ 
		\textbf{NIK}  & 0.2009  & -0.1869 & -0.1098 & 0.1089  & 0.0824  & -0.0079 & -0.1493 & -0.2428 \\ 
		\textbf{EU}   & -0.0038 & 0.1387  & -0.0013 & 0.0260  & -0.0153 & -0.0410 & -0.1027 & -0.0086 \\ 
		\textbf{ISE}  & 0.0353  & 0.2865  & -0.0750 & 0.2479  & -0.2741 & -0.0639 & 0.0101  & -0.1418 \\ 
		\textbf{EM}   & 0.0494  & -0.0218 & -0.0027 & 0.1338  & -0.1144 & -0.0990 & 0.0500  & -0.1177 \\ 
		\textbf{BVSP} & -0.0107 & 0.1202  & 0.0276  & -0.0947 & 0.1327  & -0.1674 & 0.0987  & -0.1030 \\ 
		\textbf{DAX}  & -0.0110 & 0.2072  & -0.0322 & 0.0034  & -0.0412 & -0.0503 & -0.0677 & -0.0675 \\ 
		\textbf{FTSE} & 0.0824  & 0.0176  & 0.0281  & 0.0224  & -0.1104 & 0.2309  & 0.0928  & -0.3463 \\ 
		\textbf{SP}   & 0.0506  & 0.2898  & -0.0560 & 0.0040  & 0.0037  & -0.1010 & -0.3199 & 0.1760  \\ 
	\end{tabular}
	\caption{$\bm{B_1}$ matrix for 
	the restricted Financial VAR(2) model 
(rounded to 4 decimals).}
	\label{tab:B1_rst_Fin_VAR2}
\end{table}

\begin{table}[ht]
	\scriptsize
	\centering
	\begin{tabular}{rcccccccc}
		\textbf{} & \textbf{NIK$_{-2}$} & \textbf{EU$_{-2}$} & \textbf{ISE$_{-2}$} & \textbf{EM$_{-2}$} & \textbf{BVSP$_{-2}$} & \textbf{DAX$_{-2}$} & \textbf{FTSE$_{-2}$} & \textbf{SP$_{-2}$} \\ 
		\textbf{NIK}  & -0.0455 & -0.1847 & -0.0391 & 0.0264  & 0.0906  & -0.0486 & 0.1427  & 0.0089  \\ 
		\textbf{EU}   & 0.0017  & 0.0755  & -0.0058 & 0.0047  & 0.0033  & 0.0179  & -0.0765 & -0.0370 \\ 
		\textbf{ISE}  & -0.0161 & -0.1634 & -0.0290 & -0.0021 & 0.0352  & 0.1113  & 0.0821  & 0.0313  \\ 
		\textbf{EM}   & -0.0056 & 0.0659  & -0.0330 & 0.1189  & -0.0701 & -0.0959 & -0.0167 & -0.0283 \\ 
		\textbf{BVSP} & -0.0430 & 0.0415  & -0.0456 & 0.2906  & -0.0729 & -0.0258 & -0.0389 & -0.0168 \\ 
		\textbf{DAX}  & -0.0369 & 0.0163  & 0.0130  & 0.0656  & -0.0356 & -0.0100 & -0.0203 & 0.0064  \\ 
		\textbf{FTSE} & 0.0485  & 0.3142  & -0.0820 & 0.0716  & 0.0290  & 0.0128  & -0.0845 & -0.3054 \\ 
		\textbf{SP}   & 0.0442  & -0.0606 & 0.0805  & -0.1825 & 0.0778  & 0.0117  & -0.1773 & 0.1281  \\ 
	\end{tabular}
	\caption{$\bm{B_2}$ matrix for the restricted Financial VAR(2) model 
(rounded to 4 decimals).}
	\label{tab:B2_rst_Fin_VAR2}
\end{table}

To find the optimal order $p$, information criteria are suggested, see 
e.g.~\cite{Box,Brockwell}. Here the following criteria will be used:
the AIC (Akaike Information Criterion), the AICC (bias corrected version of the AIC), the BIC (Bayesian information criterion), and the HQ (Hannan and Quinn's criterion).
Each criterion can be decomposed into two terms: an \textit{information term} that quantifies the information brought by the model (via the likelihood) 
and a \textit{penalization term} that penalizes too ``large'' number of parameters, to avoid over-fitting.
 It can be proven that the AIC has a positive probability of
overspecification and the BIC is strongly consistent, but sometimes it
underspecifies the true model.
The explicit forms of AIC, BIC, and HQ, which are to be minimized with 
respect to $p$, are as follows: 
$$
\begin{aligned}
\textrm{AIC}(p) &= \ln |{\hat \Dada} | + 
\frac{2 \left[ pd^2 +\binom{d}{2}  \right]}{n-p} =
\sum_{j=1}^d \ln {\hat \delta}_j  + 
\frac{2 \left[ pd^2 +\binom{d}{2}  \right]}{n-p} , \\
\textrm{BIC}(p) &= \ln |{\hat \Dada} | + 
\frac{\left[ pd^2 +\binom{d}{2}  \right] \ln (n-p)}{n-p} =
\sum_{j=1}^d \ln {\hat \delta}_j + 
\frac{\left[ pd^2 +\binom{d}{2}  \right] \ln (n-p)}{n-p} , \\
\textrm{HQ}(p) &= \ln |{\hat \Dada} | + 
\frac{2\left[ pd^2 +\binom{d}{2}  \right] \ln (\ln (n-p))}{n-p} =
\sum_{j=1}^d \ln {\hat \delta}_j  + 
\frac{2\left[ pd^2 +\binom{d}{2}  \right] \ln (\ln (n-p))}{n-p}  ,
\end{aligned}
$$
where $\hat \Dada$ is the estimate of the error covariance matrix $\Dada$.

The AICC (Akaike Information Criterion Corrected) is a bias-corrected version 
of Akaike's AIC, which is an estimate of the Kullback-Leibler index of the 
fitted model relative to the true model and needs further explanation. Here
$$
\textrm{AICC}(p) = -2\ln L ({\hat \A}, {\hat \B}_1 ,\dots ,{\hat \B}_p ,
 {\hat \Dada } )+\textrm{penalty} (p),
$$
where the first term is $-2$ times the log-likelihood function, evaluated at 
the parameter estimates of Theorems~\ref{tetel1} and~\ref{tetel2}, 
whereas the second term   penalizes the  computational complexity. 
The model parameters $\A, \B_1 ,\dots ,\B_p$, and $\Dada$ are estimated by the 
block Cholesky decomposition of the estimated inverse covariance matrix
$\mathfrak{C}_{p+1}^{-1}$  of 
the Gaussian random vector $(\X_{t}^T ,\X_{t-1}^T ,\dots ,\X_{t-p}^T )^T$,
see Algorithms~\ref{alg1} and~\ref{alg2}.
This is a moment estimation, but since our underlying distribution is
multivariate Gaussian, which belongs to the exponential family, asymptotically,
it is also an MLE (for ``large'' $n$) that satisfies the moment matching 
equations, see~\cite{WainwrightJordan}. Of course, the matrices
${\hat \A}, {\hat \B}_1 ,\dots , {\hat \B}_p$, and ${\hat \Dada}$ also 
depend on $p$,
 but for simplicity, we do not denote this dependence. More exactly,
$$
\begin{aligned}
 L ({\hat \A}, {\hat \B}_1 ,\dots ,{\hat \B}_p ,{\hat \Dada} ) &= 
 (2\pi )^{-\frac{(n-p)d}2} |{\hat \Dada } |^{-\frac{n-p}2}
 e^{-\frac12\sum_{t=p+1}^n \U_t^T {\hat \Dada}^{-1} \U_t }  \\ 
&= (2\pi )^{-\frac{(n-p)d}2} (\prod_{j=1}^d {\hat \delta}_j )^{-\frac{n-p}2}
 e^{-\frac12\sum_{t=p+1}^n \sum_{j=1}^d (U_{tj}^2 / {\hat \delta}_j ) }  ,
\end{aligned}
$$
where  
$$ 
\U_t ={\hat \A} (\X_t -{\hat \X}_t ) ,
$$
and
$$
 {\hat \X}_t = -{\hat \A}^{-1} {\hat \B}_1 \X_{t-1} - \dots - 
  {\hat \A}^{-1} {\hat \B}_{p} \X_{t-p } ,
$$
for $t=p+1 , \dots ,n$.

In the unrestricted model, the complexity term (see~\cite{Brockwell}) is
$$
\frac{2 \left[ pd^2 +\binom{d}{2}  \right] (n-p)d}{(n-p)d -pd^2 -
  \binom{d}{2} -1}.
$$
Therefore, 
$$ 
\begin{aligned}
&\textrm{AICC}(p) 
=(n-p)d \ln (2\pi ) +(n-p) \ln |{\hat \Dada} | +
\sum_{t=p+1}^n \U^T {\hat \Dada}^{-1} \U_t +
\frac{2 \left[ pd^2 +\binom{d}{2}  \right] (n-p)d}{(n-p)d -pd^2 -\binom{d}{2} -1} \\
&=(n-p)d \ln (2\pi ) +(n-p) \sum_{j=1}^d \ln {\hat \delta }_j  +
\sum_{t=p+1}^n \sum_{j=1}^d \frac{U_{tj}^2 }{{\hat \delta}_j} +
\frac{2 \left[ pd^2 +\binom{d}{2}  \right] (n-p)d}{(n-p)d -pd^2 -\binom{d}{2} -1} .
\end{aligned}
$$

In the restricted model, the penalization term depends on the cardinalities of the cliques $C_1 ,\dots ,C_k$ and those of the  separators $S_2 ,\dots ,S_k$ that are the same for all $p$. The penalization terms for the four criteria are
$$
\begin{aligned}
\textrm{penalty}_{\textrm{AIC}} (p)&= \frac{2 \left[ pd^2 +\sum_{j=1}^k \binom{|C_j |}{2} +\sum_{j=2}^k \binom{|S_j |}{2}   \right]}{n-p} , \\
\textrm{penalty}_{\textrm{BIC}} (p)&= \frac{\left[ pd^2 +\sum_{j=1}^k \binom{|C_j |}{2} +\sum_{j=2}^k \binom{|S_j |}{2}   \right] \ln (n-p)}{n-p} , \\
\textrm{penalty}_{\textrm{HQ}} (p)&= \frac{2\left[ pd^2 +\sum_{j=1}^k \binom{|C_j |}{2} +\sum_{j=2}^k \binom{|S_j |}{2}    \right] \ln (\ln (n-p))}{n-p}\\
\textrm{penalty}_{\textrm{AICC}} (p)&=
 \frac{2 \left[ pd^2 + \sum_{j=1}^k \binom{|C_j |}{2} +
  \sum_{j=2}^k \binom{|S_j |}{2}  \right] (n-p)d}{(n-p)d -pd^2 -
   \sum_{j=1}^k \binom{|C_j |}{2} - \sum_{j=2}^k \binom{|S_j |}{2} -1} .
\end{aligned}
$$
The cliques are usually of ``small'' sizes that can reduce 
computational complexity, but only when the number of variables $d$ is 
much ``larger'' than the clique sizes. In the financial model, with merely 
$d=8$ variables, the number of parameters is not smaller in the restricted case
than in the unrestricted one, but the gain can be substantial with $d$
larger than, say, 100. Also this number of parameters to be estimated in
the decomposable model is the same as the number of product moments calculated
for the covariance selection, only  within the cliques and separators, 
see Equation~\eqref{Kestimated}.

All of these criteria are tested, for both the restricted and unrestricted 
CVAR$(p)$ models, using the financial data above for $p=1,2,\dots ,9$.
The results for the unrestricted case are shown in Table~\ref{ofu}.

\begin{table}[ht]
	\centering
	\scriptsize
	\resizebox{0.6\columnwidth}{65pt}{%
	\begin{tabular}{r cccccccc}
		$p$ & \textbf{AIC} & \textbf{AICC} & \textbf{BIC} & \textbf{HQ} \\ 
		1 & -76.81 & \textbf{-33222.68} & \textbf{-76.07} & \textbf{-76.52} \\
		2 &	\textbf{-76.85} & -33173.98 & -75.60 & -76.36\\
		3  & -76.84 & -33095.75 & -75.08 & -76.15 \\ 
		4  & -76.83 & -33011.98 & -74.55 & -75.94 \\ 
		5 & -76.77 & -32893.23 & -73.97 & -75.67 \\ 
		6 & -76.69 & -32766.33 & -73.37 & -75.39 \\ 
		7 & -76.58 & -32612.38 & -72.74 & -75.08 \\ 
		8 & -76.48 & -32457.38 & -72.11 & -74.77 \\
		9 & -76.41 & -32316.33 & -71.52 & -74.49 \\ 
	\end{tabular}%
	}
	\caption{Order selection criteria for the unrestricted Financial
CVAR$(p)$ model, bold-face values represent minimum of each criterion.}
 \label{ofu}
\end{table}

Observe that in the unrestricted case, 
AIC reaches the minimum for $p=2$, whereas AICC,  BIC, and HQ for 
$p=1$. This is in accord with our previous experience that the parameter 
matrices did not change much after the first or second day.


\begin{table}[ht]
	\centering
	\scriptsize
	\resizebox{0.6\columnwidth}{65pt}{%
	\begin{tabular}{r cccccccc}
		$p$ & \textbf{AIC} & \textbf{AICC} & \textbf{BIC} & \textbf{HQ} \\ 
		1 & -76.82 & \textbf{-33224.46} & \textbf{-76.02} & \textbf{-76.51} \\
		2 & -76.85 & -33175.16 & -75.55 & -76.34\\
		3  & -76.88 & -33114.07 & -75.06 & -76.17 \\ 
		4  & \textbf{-76.94} & -33068.80 & -74.61 & -76.03 \\ 
		5 & -76.89 & -32952.72 & -74.03 & -75.77 \\ 
		6 & -76.86 & -32852.10 & -73.49 & -75.54 \\ 
		7 & -76.76 & -32700.55 & -72.86 & -75.23 \\ 
		8 & -76.75 & -32594.18 & -72.32 & -75.01 \\
		9 & -76.72 & -32476.45 & -71.78 & -74.79 \\ 
	\end{tabular}%
	}
	\caption{Order selection criteria for the restricted Financial
CVAR$(p)$ model, bold-face values represent minimum of each criterion.}
  \label{ofr}
\end{table}

In the restricted case (see Table~\ref{ofr}), 
except for the AIC, every criteria suggests that the 
best model is obtained with $p = 1$.
So the parameter matrices did not change much after the first day, except for
 AIC, which seems to overspecify the model and was the lowest on the 4th day, i.e. the last workday after the first workday. 


\subsection{IMR (Infant Mortality Rate) Longitudinal Data}

Here, we used the longitudinal data of six indicators (components of $\X_t$), 
spanning 21 years ($1995-2015$) from the World Bank in case of Egypt:
\begin{equation}\label{imr}
\begin{aligned}
&1: \, \textrm{IMR (Infant mortality rate)},  \\
&2: \, \textrm{MMR (Maternal mortality ratio)}, \\
&3: \, \textrm{HepB (Hepatitis-B immunization)}, \\
&4: \, \textrm{GDP (Gross domestic per capita)}, \\
&5: \, \textrm{OPExp (Out-of-pocket health expenditure as \% of HExp)}, \\
&6: \, \textrm{HExp (Current health expenditure as \% of GDP)}.
\end{aligned}
\end{equation}
For more details about these indicators, see~\cite{Abdelkhalek}. 
Through the VAR($p$) model, we show the contemporaneous and lag time effects between the components. Since the sample size is small, 
we investigate only VAR($1$) model in the unrestricted and restricted 
situations. Further, the variables are measured on different scales; so, 
we use the autocorrelations which are the autocovariances of the standardized variables. We distinguish between two working hypotheses with respect to two different ordering of the variables given by an expert:
\begin{itemize}
\item 
Case 1: $\{ \textrm{IMR,MMR,HepB,OPExp,HExp ,GDP} \}$.
\item
Case 2: $\{ \textrm{IMR,MMR,HepB,GDP,OPExp , HExp} \}$.
\end{itemize}

In the unrestricted CVAR(1) model, both orderings work, 
but we present only Case 1. 
(The estimated matrices $\A$ and $\B$ are mostly the same in both cases 
but the entries are interchanged with respect to the ordering of the variables.)
The entries of matrix $\A$ (see Table~\ref{A_un1}) represent the 
contemporaneous effects (path coefficients) between the components at time $t$. The MMR has the largest contemporaneous inverse causal effect on the IMR, 
i.e. an increase in the MMR 
caused a decrease in the IMR by $1.13$. Matrix $\B$ (see Table~\ref{B_un1}), 
on the other hand, indicates the path coefficients of the one time lag causal effect of $\X_{t-1}$ on the current $\X_{t}$ components. An increase in the IMR at one year time lag caused an increase in the IMR at the current time by $0.29$.
All other path coefficients 
in the matrices $\A, \B$ can be explained likewise.

\begin{table}[ht]
	\begin{tabular}{r c c c c c c}
		{} &  IMR &     MMR &    HepB &   OPExp &    HExp &     GDP \\
		IMR   &  1.0 & -1.1259 & -0.0161 &  0.0003 &  0.0176 & -0.1348 \\
		MMR   &  0.0 &  1.0000 &  0.3594 &  0.0492 & -0.0684 &  0.7135 \\
		HepB  &  0.0 &  0.0000 &  1.0000 & -0.1626 &  0.2510 & -0.8196 \\
		OPExp &  0.0 &  0.0000 &  0.0000 &  1.0000 & -0.6876 & -0.4229 \\
		HExp  &  0.0 &  0.0000 &  0.0000 &  0.0000 &  1.0000 &  0.6749 \\
		GDP   &  0.0 &  0.0000 &  0.0000 &  0.0000 &  0.0000 &  1.0000 \\
	\end{tabular}
	\caption{$\A$ matrix for the IMR unrestricted VAR($1$) model of Case 1 (rounded to $4$ decimals).}
	\label{A_un1}
\end{table}

\begin{table}[ht]
	\begin{tabular}{r c c c c c c}
		{} &    IMR-1 &    MMR-1 &  HepB-1 &  OPExp-1 &  HExp-1 &   GDP-1 \\
		IMR   &   0.2986 &  -0.3589 & -0.0076 &   0.0042 & -0.0115 & -0.0639 \\
		MMR   &  -0.0149 &  -0.7469 & -0.2358 &   0.0540 &  0.0193 & -0.5577 \\
		HepB  &  13.0541 & -15.7658 & -1.1915 &  -0.3506 &  0.2170 & -1.7902 \\
		OPExp &   7.1616 &  -8.0906 & -0.2994 &  -0.1038 & -0.0215 & -0.7720 \\
		HExp  &   1.6861 &  -2.9922 & -0.4650 &  -0.1566 & -0.0681 & -1.0913 \\
		GDP   & -11.0674 &  13.2182 &  0.3254 &   0.3204 & -0.3099 &  1.2129 \\
	\end{tabular}
	\caption{$\B$ matrix for the IMR unrestricted VAR($1$) model of Case 1 (rounded to $4$ decimals).}
\label{B_un1}
\end{table}

In the restricted CVAR(1) model, the graph structure is important. 
We consider only Case $2$ that provides the RZP and corresponds
to the ordering of~\eqref{imr}.
Note that the so obtained DAG is Markov equivalent to its undirected skeleton. 
The decomposable structure of the JT has two cliques
and  only one separator as follows:
$$
\begin{aligned}
C_1 &= \{ \textrm{IMR,MMR,HepB3,GDP,HExp} \} =\{1,2,3,4,6 \}, \\
C_2 &= \{ \textrm{OPExp,HExp} \}=\{ 5,6 \} ,\\
S_2 &=  \{ \textrm{HExp} \} =\{ 6 \}.
\end{aligned}
$$
In this case, the $n-1$ element sample including lag 1 variables 
is used to estimate the $12\times 12$ matrix $\hat\K$ with
covariance selection, see Equation~\eqref{khat}. 
Then the LDL algorithm was applied to the
obtained $\hat\K$ 
to estimate the model parameters $\A, \B$. Unlike the unrestricted model, here there are prescribed zero entries in $\hat\K$ and $\A$. Specifically, the zeros of the left upper $6 \times 6$ corner of $\hat\K$
  will necessarily
result in the zeros of the estimated matrix $\A$ in the same positions.
Similarly to the unrestricted situation, the non-zero 
upper-diagonal entries  of $\A$ (see Table~\ref{A_rs}) represent the path coefficients of the contemporaneous causal effects of $\X_t$, while the entries of
the matrix $\B$  (see Table~\ref{B_rs})
represent the 1 time lag causal effects of the $\X_{t-1}$ components 
on the $\X_t$ components.

\begin{table}[ht]
	\begin{tabular}{r c c c c c c}
		{} &  IMR &     MMR &    HepB &     GDP &  OPExp &    HExp \\
		IMR   &  1.0 &  0.0736 &  0.0052 &  0.0111 &    0.0 &  0.0040 \\
		MMR   &  0.0 &  1.0000 &  0.0237 &  0.1180 &    0.0 & -0.0116 \\
		HepB  &  0.0 &  0.0000 &  1.0000 & -0.3418 &    0.0 &  0.0236 \\
		GDP   &  0.0 &  0.0000 &  0.0000 &  1.0000 &    0.0 & -0.0035 \\
		OPExp &  0.0 &  0.0000 &  0.0000 &  0.0000 &    1.0 & -0.7854 \\
		HExp  &  0.0 &  0.0000 &  0.0000 &  0.0000 &    0.0 &  1.0000 \\
	\end{tabular}
	\caption{$\A$ matrix for the IMR restricted VAR($1$) model of Case 2 (rounded to $4$ decimals).}
	\label{A_rs}
\end{table}

\begin{table}[ht]
	\begin{tabular}{r c c c c cc}
		{} &   IMR-1 &   MMR-1 &  HepB-1 &   GDP-1 &  OPExp-1 &  HExp-1 \\
		IMR   & -0.9198 & -0.0592 & -0.0021 &  0.0053 &  -0.0013 & -0.0049 \\
		MMR   & -1.0175 &  0.2511 & -0.0002 &  0.0607 &  -0.0047 &  0.0039 \\
		HepB  &  3.6850 & -4.4606 & -0.9058 & -0.4230 &  -0.0811 & -0.0950 \\
		GDP   &  0.5849 & -0.4453 &  0.0640 & -0.8573 &   0.0896 &  0.1086 \\
		OPExp &  3.6432 & -3.7486 & -0.1413 & -0.4380 &   0.0273 & -0.0926 \\
		HExp  &  1.9561 & -3.4687 & -0.5221 & -0.6302 &  -0.2298 & -0.1171 \\
	\end{tabular}
	\caption{$\B$ matrix for the IMR restricted VAR($1$) model of Case 2 (rounded to $4$ decimals).}
\label{B_rs}
\end{table}

\section{Discussion}\label{discussion}

The main contribution of our paper is the introduction of 
causality in
VAR models by using graphical modeling tools. 
SVAR models are known in the literature, 
but there the inclusion of the upper triangular matrix $\A$ 
rather facilitates an alternative
solution for the Yule--Walker equations, and not the causal ordering of the
contemporaneous effects. 

Our unrestricted CVAR model does this job, where the recursive ordering of the 
variables follows a DAG ordering in the directed graphical model 
contemporaneously and the entries of $\A$ are treated like
path coefficients of SEM. 
Also, the white noise process $\U_t$ of structural shocks 
(see Equation~\eqref{mimop}) is obtained from
the process $\V_t$ of innovations in the reduced form 
(see Equation~\eqref{reduced}) 
and have an econometric interpretation.
The structural shocks are mutually uncorrelated and they are assigned
to the individual variables. 
They also represent unanticipated changes in the observed econometric variables.
However, they
are not just orthogonalized innovations, but here the labeling of the nodes
and the graph skeleton behind the matrix  $\A$ also counts.

In the unrestricted case, the following estimation scheme is used.
The DAG is built partly by expert knowledge and partly
by starting with an undirected Gaussian graphical model, using known
algorithms (e.g. MCS) to find a triangulated graph and a (not necessarily
unique) perfect labeling of the nodes, in which ordering the directed and
undirected models are Markov equivalent to each other (there are no sink V
configurations in the DAG). However, here the Markov equivalence is not
important: even if the undirected graph is not triangulated, and the DAG
contains sink Vs, the DAG ordering (given, e.g. by an expert) can be used
to estimate the $\A$ and $\B$ matrices, which are full in the sense that no
zero constraints for their entries are assumed at the  beginning. 
After having the DAG ordering, we apply the block LDL decomposition for
the estimated block matrix $\mathfrak{C}_{2}$ or $\mathfrak{C}_{p+1}$,
and retrieve the estimated parameter matrices by Theorem~\ref{tetel1} 
or~\ref{tetel2}.

It is the restricted CVAR model, where zero constraints for the entries of $\A$
(in the given DAG ordering) are made. For this purpose, we re-estimate the
covariance matrix (the big block matrix, the size of which depends on the 
order $p$ of the model) such that the entries in the left upper block of its 
inverse are zeros in the no-edge positions. For this, there is the method
of covariance selection at our disposal, which
works for Gaussian variables even if the prescribed zeros in the inverse 
covariance matrix do not have the RZP 
(RZP is just the property of decomposable models). 
In this case, our algorithm first applies algorithms (e.g. MCS) to find the
JT structure of the graph (which is equivalent to having
an RZP). The estimation scheme is enhanced with
covariance selection, for which there are closed form estimates in the
decomposable case, see Appendix~\ref{undirg}.
Actually, we use  an improved version of the
covariance selection that needs higher order autocovariances too.
Since the necessary product moment estimates include only variables belonging
to the cliques and separators, it can reduce the computational complexity of
the restricted CVAR model compared to the unrestricted one. However, it holds
only if the number of variables $d$ is much larger than the clique sizes.
The information criteria, applied to select the optimal order $p$, 
also  take into consideration the number of parameters to be estimated.

\section{Conclusions and Further Perspectives}\label{concl}

Our algorithm is as well applicable to longitudinal data instead of time
series.
 The $p=0$ case resolves the problem posed in~\cite{Wermuth}, and the $p=1$
case is also applicable to solve a SEM with 
endogenous and exogenous variables.

As a further perspective, lagged causalities could also be introduced, with
some upper triangular matrices $\B$. For example, if the previous time
observations influence the present time ones, and the order of causalities
is the same as that of the contemporaneous ones, then $\B_1$ is also upper
triangular. This problem can be solved by running the block Cholesky
decomposition with $2d$ singleton blocks and treating
only the other blocks ``en block''.


\vfill

\noindent\textbf{Author Contributions:}
{The theoretical parts of the paper with theorems and proofs were written by 
Marianna Bolla. The Python code
for the algorithms was written by Haoyou Wang and Renyuan Ma, and tested with 
the simulated data of William Thompson and Catherine Donner. Dongze Ye 
developed the graph-building and covariance selection program in Python, and organized the Python 
programs in a notebook with many explanations. Valentin Frappier wrote
the part of the Python program for order selection.
 M\'at\'e Baranyi PhD, former PhD student of the first author,
gave useful ideas during the initial phase of the research, and provided 
continuous technical support later on.
Fatma Abdelkhalek PhD, also former PhD student of the first author, 
 verified all the proof computations numerically, continuously worked on the 
manuscript and provided the World Bank data, together with building the graphs.}


\noindent\textbf{Data Availability:}
{The third-party financial dataset analyzed in this article is available in
the UCI Machine Learning Repository, and was collected by the authors 
of~\cite{Akbilgic}.
The dataset is available in: Dua, D. and Graff, C. (219).  
UCI Machine Learning Repository, Irvine, CA: University of California, School of
Information and Computer Science, 
\url{https://archive.ics.uci.edu/ml/datasets/ISTANBUL+STOCK+EXCHANGE}.

The World Bank Data on infant mortality rate are available on
\url{https://data.worldbank.org/indicator}, also see~\cite{Abdelkhalek}.} 

\noindent\textbf{Acknowledgments:}
{The research was done under the auspices of the Budapest Semesters in Mathematics program, in the framework of an undergraduate online research course in
summer 2021, with the participation of US undergraduate students.
Two PhD students of the corresponding author  also participated. 
In particular, Fatma Abdelkhalek's work was funded by a scholarship under the 
Stipendium Hungaricum program between Egypt and Hungary;  whereas,
Valentin Frappier's internship by the Erasmus program of the EU.}



\vbox{
\noindent\textbf{Abbreviations}\\
{
The following abbreviations are used in this manuscript:\\
\noindent 
\begin{tabular}{@{}ll}
VAR & Vector Auto-Regression  \\
SVAR & Structural Vector Auto-Regression   \\
CVAR & Causal Vector Auto-Regression \\
SEM & Structural Equation Modeling \\
DAG & Directed Acyclic Graph \\
JT & Junction Tree \\
MCS & Maximal Cardinality Search \\
IPS & Iterative Proprtional Scaling \\
RZP & Reducible Zero Pattern \\
MRF & Markov Random Field  \\
AIC & Akaike Information Criterion \\
AICC & Akaike Information Criterion Corrected \\
BIC & Bayesian Information Criterion \\
HQ & Hannan and Quinn's criterion \\
MLE & Maximum Likelihood Estimate \\
PLS & Partial Least Squares regression \\
RMSE & Root Mean Square Error \\
IMR & Infant Mortality Rate  \\
LDL & variant of the Cholesky decomposition for a symmetric, positive \\
\textrm{   } & 
semidefinite matrix as $\LL$(lower triangular)$\times \DD$(diagonal)$\times \LL^T$ 
\end{tabular}
}}

\begin{appendices}
\numberwithin{equation}{section}

\section{Proofs of the main theorems}\label{appA}

\subsection{Proof of Theorem~\ref{tetel1}}\label{biz1}

First of all note that the block Cholesky decomposition
applies to $\K$ partitioned symmetrically into
$(d+1)\times (d+1)$ blocks of sizes $1,\dots ,1,d$, where the number of
singleton blocks (of size 1) is  $d$. (In the $p=0$ case all the blocks are
singletons, so the standard LDL decomposition is applicable.)
Therefore, in 
the main diagonal of the resulting $\LL$ we have number $d$ of 
1s and $\I_{d\times d}$ (the $d\times d$ identity matrix), 
see the forthcoming Equation~\eqref{uj}.
In other words, the last $d$ rows (and columns) are treated ``en block'',
this is why here indeed the block LDL (variant of the block Cholesky) 
decomposition is applicable.

Let us compute the inverse of the matrix $\LL \DD \LL^T$ with block matrices
$\LL$ and $\DD$
partitioned as in Equation~\eqref{fatma}. 
For the time being we only assume that $\A$ is a
$d\times d$ upper triangular matrix  with 1s along its main diagonal, 
$\B$ is $d\times d$, and the diagonal matrix $\Dada$ has positive
diagonal entries. We will use
the computation rule of the inverse of a
symmetrically partitioned block matrix~\cite{Rozsa}, which is applicable 
due to the fact that $|\A | =1$, so the matrix $\A$ is invertible:
$$
\begin{aligned}
 (\LL \DD \LL^T )^{-1} &=\begin{pmatrix}
  \A & \B \\
  \OO_{d\times d}  &\I_{d\times d}
 \end{pmatrix}^{-1}
   \begin{pmatrix}
  \Dada^{-1}  &\OO_{d\times d} \\
  \OO_{d\times d}  &\C^{-1} (0)
 \end{pmatrix}^{-1}
\begin{pmatrix}
  \A^T  &\OO_{d\times d} \\
  \B^T  &\I_{d\times d}
 \end{pmatrix}^{-1}   \\
&=\begin{pmatrix}
  {\A}^{-1}   &  -{\A}^{-1} \B \\
  \OO_{d\times d}  &\I_{d\times d}
 \end{pmatrix}
 \begin{pmatrix}
  \Dada  &\OO_{d\times d} \\
  \OO_{d\times d}  &\C (0)
 \end{pmatrix}
\begin{pmatrix}
  (\A^T)^{-1}  & \OO_{d\times d} \\
 -\B^T (\A^T)^{-1}  &\I_{d\times d}
\end{pmatrix}  \\
&= \begin{pmatrix}
  \A^{-1} \Dada ({\A}^{-1})^T +\A^{-1} \B \C (0) \B^T ({\A}^{-1})^T  
 & -{\A}^{-1} \B\C (0) \\
  -\C (0)  \B^T (\A^T)^{-1}  &\C (0) 
 \end{pmatrix} .
\end{aligned}
$$
Now we are going to prove that the above matrix equals 
$\mathfrak{C}_2$ if and only if $\A ,\B ,\Dada$ satisfy the model equations.
Comparing the blocks to those of~\eqref{egyik}, 
the right bottom block is $\C (0)$ in both expressions. 
Comparing the left bottom blocks, we get
$-\C(0) \B^T (\A^T)^{-1} =\C (1)$, and so, $\B^T = -\C^{-1} (0) \C (1) \A^T$ 
and $\B = -\A \C^T (1) \C^{-1} (0)$ should hold for  $\B$. It is 
in accord with the model equation. Indeed, \eqref{mimo} 
is  equivalent to
$$
 \B \X_{t-1} = -\A \X_{t} +\U_t  ,
$$
which, after multiplying with $\X_{t-1}^T$ from the right and 
taking expectation, yields
$\B \C (0) = -\A \C^T (1)$
that in turn provides 
\begin{equation}\label{B}
 \B = -\A \C^T (1) \C^{-1} (0) .
\end{equation}
By symmetry, the same applies to the right upper block.
As for the left upper block,
$$
 \A^{-1} \Dada ({\A}^{T})^{-1} +\A^{-1} \B \C (0) \B^T ({\A}^{T})^{-1} =\C(0)
$$
should hold. Multiplying this equation with $\A$ from the left and with $\A^T$
from the right, we get the equivalent equation
\begin{equation}\label{dada1}
 \Dada = \A \C(0)\A^T -\B \C (0) \B^T .
\end{equation}
This is in accord with Equation~\eqref{mimo} that implies
$$
 \E (\A \X_{t}  +\B \X_{t-1} )(\A \X_{t} +\B \X_{t-1} )^T =
 \A \C (0) \A^T +\A \C^T (1) \B^T + \B \C(1) \A^T + \B \C (0) \B^T =\Dada  .
$$
Combining this with Equation~\eqref{B}, we get
$$
\begin{aligned}
 \Dada &=
 \A \C (0) \A^T +\A \C^T (1) \B^T + \B \C(1) \A^T + \B \C (0) \B^T  \\
 &= \A \C (0) \A^T -\A \C^T (1) \C^{-1} (0) \C (1) \A^T  
  - \A \C^T (1) \C^{-1} (0) \C(1) \A^T   \\
 &+ \A \C^T (1) \C^{-1} (0) \C (0) \C^{-1} (0) \C(1) \A^T \\
 &=\A \C (0) \A^T -\A \C^T (1) \C^{-1} (0) \C (1) \A^T 
  = \A \C(0)\A^T -\B \C (0) \B^T  ,
\end{aligned}
$$
that also satisfies~\eqref{dada1}.

Summarizing, we have proved that under the model equations,
$(\LL \DD \LL^T )^{-1} = \mathfrak{C}_2$, or equivalently,
$\LL \DD \LL^T  = \K$ indeed holds. 
In view of the uniqueness of the block LDL decomposition
(under positive definiteness of the involved matrices), 
this finishes the proof.

\subsection{Algorithm for the Block LDL Decomposition of Section~\ref{biz1}}\label{alg1}

By the preliminary assumptions, $\K$ and so $\DD$ are positive definite;
therefore, $\Dada$ has positive diagonal entries. 
To apply the protocol of the block Cholesky decomposition
which gives the theoretically guaranteed unique solution,
it is worth writing the above matrices according to the blocks as follows.
The matrix $\LL$ has the partitioned form
\begin{equation}\label{uj}
\LL = \begin{pmatrix}
  1      &0       &0      &\dots  &0 &0 &\dots &0 \\
 {\ell}_{21}  &1       &0       &\dots  &0 &0 &\dots &0 \\
 {\ell}_{31}  &{\ell}_{32}   &1      &\dots  &0 &0 &\dots &0 \\
 \vdots  &\vdots  &\vdots &\vdots  &\vdots &0 &\dots &0 \\
 {\ell}_{d1}  &{\ell}_{d2} &\dots &{\ell}_{d,d-1}  &1 &0 &\dots &0 \\
 \bm{\ell}_{d+1,1}  &\bm{\ell}_{d+1,2} &\vdots &\bm{\ell}_{d+1,d-1} 
 &\bm{\ell}_{d+1,d} & &\I_{d\times d} & 
\end{pmatrix} ,
\end{equation}
where the $2d\times 2d$ lower triangular matrix $\LL$ is also lower triangular
with respect to its blocks which are partly scalars, partly vectors, 
partly matrices as follows: 
$$
 {\ell}_{ij} = \left\{\begin{array}{ll}
  a_{ji},  & j=1,\dots ,d-1; \quad i=j+1, \dots ,d;  \\
  1 & i=j=1,\dots ,d  ; \\
  0 & i=1,\dots ,d; \quad j=i+1,\dots ,2d;
                 \end{array}
                 \right.
$$
further, the vectors $\bm{\ell}_{d+1,j}$ are $d\times 1$ for $j=1,\dots ,d$, 
and 
comprise the column vectors of the $d\times d$ matrix $\B^T$. The matrix
in the bottom right block is the $d\times d$ identity $\I_{d\times d}$, and 
above it, the
zero entries can be arranged into the $d\times d$ zero matrix $\OO_{d\times d}$.

The $2d\times 2d$ block-diagonal matrix $\DD$ in partitioned form is
$$
\DD = \begin{pmatrix}
  \delta_1^{-1}      &0       &0      &\dots  &0 &0 &\dots &0 \\
  0             &\delta_2^{-1}       &0       &\dots  &0 &0 &\dots &0 \\
  0             &0   &\delta_3^{-1}      &\dots  &0 &0 &\dots &0 \\
 \vdots  &\vdots  &\vdots &\vdots  &\vdots &\0 &\dots &\0 \\
 0  &0 &0 &0  &\delta_d^{-1} &0 &\dots &0 \\
 \0  &\0 &\vdots &\0 
 &\0 & &\C^{-1} (0)  & 
\end{pmatrix} ,
$$
where the $d\times 1$ vectors $\0$ comprise $\OO_{d\times d}$ in the left 
bottom, and
the entries comprise the inverse of  the $d\times d$ positive 
definite matrix $\C (0)$ in the right
bottom block. We perform the following multiplications
of block matrices, also using formulas~\cite{Golub,Rozsa} for their inverses
and the algorithm proposed in~\cite{Nocedal} to get 
the recursion of the block LDL decomposition that goes on as follows.
\begin{itemize}
\item Outer cycle (column-wise). 
 For $j=1,\dots ,d$: 
 $\delta^{-1}_j = k_{jj} -\sum_{h=1}^{j-1} {\ell}_{jh} \delta_h^{-1} 
 {\ell}_{jh}$ (with the reservation that $\delta^{-1}_1 = k_{11}$);
\item Inner cycle (row-wise). For $i=j+1 ,\dots ,d$:
\begin{equation}\label{*}
 {\ell}_{ij} = \left( k_{ij} - \sum_{h=1}^{j-1} {\ell}_{ih} \delta_h^{-1}  
 {\ell}_{jh} \right) \delta_j 
\end{equation}
and   
$$
 \bm{\ell}_{d+1,j} = \left( \bm{k}_{d+1,j} - \sum_{h=1}^{j-1} 
 \bm{\ell}_{d+1,h} \delta_h^{-1}  {\ell}_{jh} \right) \delta_j 
$$
(with the reservation that in the $j=1$ case the summand is zero),
where  $\bm{k}_{d+1,j}$ for $j=1,\dots ,d$ is $d\times 1$ vector 
in the bottom left block of $\K$. 
\end{itemize}
Note that the last step of the outer cycle, when $j=d+1$, formally would be
$$
 \C^{-1} (0) = \K_{d+1,d+1} -\sum_{h=1}^{d} \bm{\ell}_{d+1,h} \delta_h^{-1} 
  \bm{\ell}_{d+1, h}^T  =
 \K_{d+1,d+1} -\sum_{h=1}^{d}  \delta_h^{-1} \bm{\ell}_{d+1,h}
  \bm{\ell}_{d+1, h}^T ,
$$
where  $\bm{\ell}_{d+1,h}$ for $h=1,\dots ,d$ are $d\times 1$ vectors and
$\K_{d+1,d+1}$ is the bottom right $d\times d$ block of the $2d\times 2d$ 
concentration matrix $\K$; but it need not be performed as it is in accord 
with Theorem~\ref{tetel1}.
 Then no inner cycle follows and the recursion
ends in one run.

It is obvious that  the above decomposition has a nested structure, so
for the first $d$ rows of $\LL$, 
only its previous rows or preceding entries in the same row enter into
the calculation, as if we performed the standard LDL decomposition of $\K$.
 Therefore,
${\ell}_{ij} = a_{ji}$ for $j=1,\dots ,d-1$, $i=j+1, \dots ,d$ that are
the partial regression coefficients akin to those offered by 
the standard LDL decomposition 
$\K = {\tilde \LL } {\tilde \DD } {\tilde \LL }^T$; so the first $d$ rows of
$\tilde \LL$ and $\LL$ are the same, and the first $d$ rows of
$\tilde \DD$ and $\DD$ are the same too.

When the process terminates after finding the first $d$ rows of $\LL$, 
we consider the blocks ``en block'' and get the matrix
$\B =( \bm{\ell}_{d+1,1}, \dots ,  \bm{\ell}_{d+1,d})^T$.

\subsection{Proof of Theorem~\ref{tetel2}}\label{biz2}

Note that here the block Cholesky decomposition
applies to $\K$ partitioned symmetrically into
$(d+1)\times (d+1)$ blocks of sizes $1,\dots ,1,pd$ with number $d$ of 
singleton blocks. 
(Therefore, in 
the main diagonal of $\LL$ we have number $d$ of 
1s and $\I_{pd\times pd}$.)
The $d\times pd$ matrix $\B$, transpose of $\B^T$, will
contain the coefficient 
matrices of Equation~\eqref{mimop} in its blocks, like
$$
  \B = (\B_1 \dots \B_p ) .
$$ 

The proof goes on similarly as in Section~\ref{biz1}. 
Though, for completeness and being able to formulate the algorithm, we discuss
it herein.
Let us compute the inverse of the matrix $\LL \DD \LL^T$ with block matrices
$\LL$ and $\DD$
partitioned as in Equation~\eqref{fatma1}. 
For the time being we only assume that $\A$ is a
$d\times d$ upper triangular matrix  with 1s along its main diagonal, 
$\B$ is $d\times pd$, and the diagonal matrix $\Dada$ has positive
diagonal entries. We can again use
the computation rule of the inverse of 
symmetrically partitioned block matrices, since the matrix $\A$ is invertible.  
$$
\begin{aligned}
 (\LL \DD \LL^T )^{-1} &=\begin{pmatrix}
  \A & \B \\
  \OO_{pd\times d}  &\I_{pd\times pd}
 \end{pmatrix}^{-1}
   \begin{pmatrix}
  \Dada^{-1}  &\OO_{d\times pd} \\
  \OO_{pd\times pd}  &\mathfrak{C}_p^{-1} 
 \end{pmatrix}^{-1}
\begin{pmatrix}
  \A^T  &\OO_{d\times pd} \\
  \B^T  &\I_{pd\times pd}
 \end{pmatrix}^{-1}   \\
&=\begin{pmatrix}
  {\A}^{-1}   &  -{\A}^{-1} \B \\
  \OO_{pd\times d}  &\I_{pd\times pd}
 \end{pmatrix}
 \begin{pmatrix}
  \Dada  &\OO_{d\times pd} \\
  \OO_{pd\times d}  &\mathfrak{C}_p 
 \end{pmatrix}
\begin{pmatrix}
  (\A^T)^{-1}  & \OO_{d\times pd} \\
 -\B^T (\A^T)^{-1}  &\I_{pd\times pd}
\end{pmatrix}  \\
&= \begin{pmatrix}
  \A^{-1} \Dada ({\A}^{-1})^T +\A^{-1} \B \mathfrak{C}_p \B^T ({\A}^{-1})^T  
 & -{\A}^{-1} \B\mathfrak{C}_p \\
  -\mathfrak{C}_p   \B^T (\A^T)^{-1}  &\mathfrak{C}_p  
 \end{pmatrix} .
\end{aligned}
$$
Now we are going to prove that the above matrix equals 
$\mathfrak{C}_{p+1}$ if and only if $\A ,\B ,\Dada$ satisfy the model equations.
Comparing the blocks to those of~\eqref{eq:block_Toeplitz}, 
the right bottom block is $\mathfrak{C}_p$ in both expressions. 
Comparing the left bottom blocks, we get
$-\mathfrak{C}_p \B^T (\A^T)^{-1} =\C (1,\dots ,p)$, and so, $\B^T = 
-\mathfrak{C}_p^{-1} \C (1,\dots ,p) \A^T$ 
and $\B = -\A \C^T (1,\dots ,p) \mathfrak{C}_p^{-1}$ should hold for  $\B$. 
It is in accord with the model equation: indeed, \eqref{mimop} 
is  equivalent to
$$
 \B_1 \X_{t-1} + \dots + \B_{p} \X_{t-p } = -\A \X_{t} +\U_t ,
$$
which, after multiplying with $\X_{t-1}^T, \dots ,\X_{t-p}^T$ from the right and 
taking expectation, in concise form yields
$\B \mathfrak{C}_p = -\A \C^T (1,\dots ,p)$
that in turn provides 
\begin{equation}\label{BB}
 \B = -\A \C^T (1,\dots ,p) \mathfrak{C}_p^{-1} .
\end{equation}
By symmetry, it also applies to the right upper block.
As for the left upper block,
$$
 \A^{-1} \Dada ({\A}^{T})^{-1} +\A^{-1} \B \mathfrak{C}_p \B^T ({\A}^{T})^{-1} =\C(0)
$$
should hold. Multiplying this equation with $\A$ from the left and with $\A^T$
from the right, we get the equivalent equation
\begin{equation}\label{dada}
 \Dada = \A \C (0)  \A^T -\B \mathfrak{C}_p  \B^T .
\end{equation}
This is in accord with Equation~\eqref{mimop} that implies
$$
\begin{aligned}
 &\E (\A \X_t +\B_1 \X_{t-1} + \dots + \B_{p} \X_{t-p })
   (\A \X_{t} +\B_1 \X_{t-1} + \dots + \B_{p} \X_{t-p } )^T  \\
 &=\A \C (0)  \A^T +\A \C^T (1,\dots ,p) \B^T + \B \C(1,\dots ,p) \A^T + 
  \B \mathfrak{C}_p  \B^T =\Dada  .
\end{aligned}
$$
Combining this with Equation~\eqref{BB}, we have
$$
\begin{aligned}
 \Dada &=
 \A \C (0) \A^T +\A \C^T (1,\dots ,p) \B^T + \B \C(1,\dots ,p) \A^T + 
  \B \mathfrak{C}_p \B^T  \\
 &= \A \C (0)   \A^T -\A \C^T (1,\dots ,p) \mathfrak{C}_p^{-1}  
  \C (1,\dots ,p) \A^T  
  - \A \C^T (1,\dots ,p) \mathfrak{C}_p^{-1} \C(1,\dots ,p) \A^T   \\
 &+ \A \C^T (1,\dots ,p) \mathfrak{C}_p^{-1}  \mathfrak{C}_p  
 \mathfrak{C}_p^{-1}  \C(1,\dots ,p) \A^T \\
 &=\A \C (0)  \A^T -\A \C^T (1,\dots ,p) \mathfrak{C}_p^{-1} 
  \C (1,\dots ,p) \A^T 
  = \A \C (0) \A^T -\B \mathfrak{C}_p \B^T ,
\end{aligned}
$$
that also satisfies~\eqref{dada}.

Summarizing, we have proved that under the model equations,
$(\LL \DD \LL^T )^{-1} = \mathfrak{C}_{p+1}$, or equivalently,
$\LL \DD \LL^T  = \K$ indeed holds. 
In view of the uniqueness of the block LDL decomposition
(under positive definiteness of the involved matrices), 
this finishes the proof.

\subsection{Algorithm for the Block LDL Decomposition of Section~\ref{biz2}}\label{alg2}

Again, the protocol of the block Cholesky decomposition is applied to the
involved matrices in block partitioned form. Here
$$
\LL = \begin{pmatrix}
  1      &0       &0      &\dots  &0 &0 &\dots &0 \\
 {\ell}_{21}  &1       &0       &\dots  &0 &0 &\dots &0 \\
 {\ell}_{31}  &{\ell}_{32}   &1      &\dots  &0 &0 &\dots &0 \\
 \vdots  &\vdots  &\vdots &\vdots  &\vdots &0 &\dots &0 \\
 {\ell}_{d1}  &{\ell}_{d2} &\dots &{\ell}_{d,d-1}  &1 &0 &\dots &0 \\
 \bm{\ell}_{d+1,1}  &\bm{\ell}_{d+1,2} &\vdots &\bm{\ell}_{d+1,d-1} 
 &\bm{\ell}_{d+1,d} & &\I_{pd\times pd} & 
\end{pmatrix} ,
$$
where the $(p+1)d\times (p+1)d$ lower triangular matrix $\LL$ is also lower 
triangular
with respect to its blocks which are partly scalars, partly vectors, 
partly matrices as follows: 
$$
 {\ell}_{ij} = \left\{\begin{array}{ll}
  a_{ji},  & j=1,\dots ,d-1; \quad i=j+1, \dots ,d;  \\
  1 & i=j=1,\dots ,d  ;  \\
  0 & i=1,\dots ,d; \quad j=i+1,\dots ,(p+1) d;
                 \end{array}
                 \right.
$$
further, the vectors $\bm{\ell}_{d+1,j}$ are $pd\times 1$ for $j=1,\dots ,d$, 
and 
comprise the column vectors of the $pd\times d$ matrix $\B^T$. The matrix
in the bottom right block is the $pd\times pd$ identity, and above it, the
zero entries can be arranged into the $d\times pd$ zero matrix.

The $(p+1)d\times (p+1)d$ block-diagonal matrix $\DD$ in partitioned form is
$$
\DD = \begin{pmatrix}
  \delta_1^{-1}      &0       &0      &\dots  &0 &0 &\dots &0 \\
  0             &\delta_2^{-1}       &0       &\dots  &0 &0 &\dots &0 \\
  0             &0   &\delta_3^{-1}      &\dots  &0 &0 &\dots &0 \\
 \vdots  &\vdots  &\vdots &\vdots  &\vdots &\0 &\dots &\0 \\
 0  &0 &0 &0  &\delta_d^{-1} &0 &\dots &0 \\
 \0  &\0 &\vdots &\0 
 &\0 & & \mathfrak{C}_p^{-1}  & 
\end{pmatrix} ,
$$
where the $pd\times 1$ vectors $\0$ comprise $\OO_{pd \times d}$ in the 
left bottom, and  the matrix $\mathfrak{C}_p^{-1}$  stands in the right
bottom block. With multiplication rules
of block matrices and their inverses,
the recursion of the block LDL decomposition goes on as follows.
\begin{itemize}
\item Outer cycle (column-wise). For $j=1,\dots ,d$: 
 $\delta^{-1}_j = k_{jj} -\sum_{h=1}^{j-1} {\ell}_{jh} \delta_h^{-1} {\ell}_{jh}$
 (with the reservation that $\delta^{-1}_1 = k_{11}$);
\item Inner cycle (row-wise). For $i=j+1 ,\dots ,d$:
$$ {\ell}_{ij} = \left( k_{ij} - \sum_{h=1}^{j-1} {\ell}_{ih} \delta_h^{-1}  
 {\ell}_{jh} \right) \delta_j 
$$
and   
$$
 \bm{\ell}_{d+1,j} = \left( \bm{k}_{d+1,j} - \sum_{h=1}^{j-1} 
 \bm{\ell}_{d+1,h} \delta_h^{-1}  {\ell}_{jh} \right) \delta_j  
$$
(with the reservation that in the $j=1$ case the summand is zero),
where  $\bm{k}_{d+1,j}$ for $j=1,\dots ,d$ are $pd\times 1$ vectors 
in the bottom left block of $\K$. 
\end{itemize}
The recursion ends in one run.

The above decomposition is again a nested one, so
for the first $d$ rows of $\LL$, 
only its previous rows or preceding entries in the same row enter into
the calculation, as if we performed the usual LDL decomposition of $\K$.
 Therefore,
${\ell}_{ij} = a_{ji}$ for $j=1,\dots ,d-1$, $i=j+1, \dots ,d$ that are
the negatives of the partial regression coefficients 
 akin to those offered by the standard LDL decomposition 
$\K = {\tilde \LL } {\tilde \DD } {\tilde \LL }^T$; so the first $d$ rows of
$\tilde \LL$ and $\LL$ are the same, and the first $d$ rows of
$\tilde \DD$ and $\DD$ are the same too.

When the process terminates, we consider the blocks ``en block'' and get the
$pd\times d$ matrix
$\B^T =( \bm{\ell}_{d+1,1}, \dots ,  \bm{\ell}_{d+1,d})$.

\section{Graphical Models}\label{appB}
   
Here we give a short description of graphical models, 
with emphases to the Gaussian case, based on~\cite{Bollacta,Lauritzen}.
Directed and undirected models  have many properties in common, 
and under some conditions, there are important correspondences between them.
Graphical models represent conditional independence  properties
of the components of a random vector by means of the structural properties 
of a suitably constructed graph.

\subsection{Directed Graphical Model: Bayesian Network}\label{bn}

The nodes of the graph $G$ correspond to random variables  
$X_1 ,\dots ,X_d$, 
whereas the directed edges to \textit{causal} 
dependences between them.  In case of a DAG  $G$ 
with node-set $V=\{ 1,\dots ,d \}$,
there are no directed cycles, and therefore,
there exists a recursive ordering (labeling) of
the nodes such that for every directed edge $j\to i$, $i<j$ holds;
we can refer to this relation as $j$ is the \textit{parent} of $i$.
So the youngest node has label 1, and the older
a node, the larger its label is (we can think of labels as ages).
We use this, so-called (not necessarily unique) 
\textit{topological labeling} of the node-set.
(Note that some authors use the opposite ordering.)

There can be multiple parents of $i$ (maximum $d-i$ ones).
Let $\pa (i) \subset \{ {i+1} ,\dots ,d \}$ denote the set of the parents 
of $i$, and   for any $A\subset V$ we use the notation
$\x_A = \{ x_i : \, i\in A \}$ and $\X_A = \{ X_i : \, i\in A \}$.
To draw the edges, the
\textit{directed pairwise Markov property}  is used: for $i<j$,
there is no $j\to i$
directed edge, whenever  $X_i$ and $X_{j}$ are conditionally independent, 
given $\X_{\pa (i)}$. With notation,
$$
 X_i \indep X_{j} | \X_{\pa (i)} \quad \textrm{for} \quad j\in 
 \{ {i+1} ,\dots ,d \} \setminus \pa (i),  \quad i=1,\dots ,d-1 .
$$

\subsection{Undirected Graphical Model: Markov Random Field (MRF)}\label{mrf}

The nodes of the graph $G$ also correspond to random variables 
$X_1 ,\dots ,X_d$, but  the conditional independence 
statements here include the neighbors of the nodes.
The \textit{undirected global Markov property} 
of a joint distribution with respect to  $G$ is defined as follows:
$$
 \X_A \indep \X_B | \X_S   
$$
holds for any node cutset $S$ between disjoint 
node subsets $A$ and $B$ (i.e. removing nodes of $S$ will make $A$ and $B$
disconnected).
The \textit{undirected pairwise Markov property}  is defined as 
follows: nodes $i$ and $j$ are not connected whenever
$$
 X_i \indep X_j | \X_{V\setminus \{ i,j \} } , \quad i\neq j 
$$
holds.
The \textit{undirected factorization property}  means the
factorization of the joint density, for any state configuration
$\x = (x_1 ,\dots ,x_d )$ as
$$
 f ( \x ) = \frac1{Z}\prod_{C \in {\CC }} \Psi_{C} (\x_C )  
$$
with normalizing constant $Z>0 $ and 
non-negative \textit{compatibility function}s  $\Psi_C$s assigned to 
the cliques  of $G$. 
Under \textit{clique} we understand a maximal complete subgraph.
(Note that, in graph theory, they are sometimes called maximal cliques.)
 The above factorization  is far not unique, but in special 
(so-called called decomposable) models, the
forthcoming Equation~\eqref{marko} gives an explicit formula for the 
compatibility functions.
The Hammersley--Clifford theorem (see~\cite{Lauritzen}) states that in case
of positive distributions, the above undirected Markov 
properties are equivalent.
(Positivity means that the joint density is absolutely continuous with
respect to the direct product of its marginal distributions, e.g. in case of a 
non-degenerate multivariate Gaussian distribution.)

Also, even if the underlying graph is undirected, a decomposable structure of it
(see Subsection~\ref{decomp})
gives a (not necessarily unique) so-called \textit{perfect ordering} 
of the nodes, in which order directed edges can be drawn. 
Conversely, a directed graph can be made undirected
by disregarding the orientation
of the edges  and introducing possible new undirected edges by ``moralization'':
we connect two parents (having a common child) whenever they are
not connected (married). The so obtained \textit{moral graph}
 is then used in the MRF setup.
Moralization is needed when for some triplet $i,j>k$ in the DAG $G$, 
$i\to k$ and $j\to k$ holds, but there is no directed edge between $i$ and 
$j$. Such a triplet is called \textit{sink V}, see Figure~\ref{FATMA}.
\begin{figure}
	\tikzstyle{vertex} = [draw, fill, circle, node distance=40pt, ultra thick, scale=0.1]
	\tikzstyle{dummy} = [node distance=20pt]
	\tikzstyle{edge} = [opacity=1, cap=round, join=round]
	\tikzstyle{redge} = [opacity=1, -latex,  thick]
\begin{center}
		\begin{tikzpicture}

		\node[vertex,  label =above:{$i=$2}] (a) {j};	
	\node[vertex, node distance = 400pt, left of = a, label =above:{$j=$3}] (b) {2};	
		\node[vertex, node distance = 300pt, below right of = b, label =below:{$k=$1}] (c) {1};
		
		\draw[redge] (a) -- (c) ;
		\draw[redge] (b) -- (c) ;
		\end{tikzpicture}
	\end{center}
	\caption{Triplet \textit{sink V}.}\label{FATMA}
\end{figure}
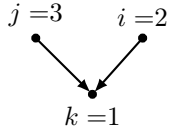
 Then even if
$X_i$ and $X_j$ are (marginally) independent, they are not conditionally
independent any more, given $X_k$. For example, 
if $X_i$ is the years of former schooling and $X_j$ is the gender, then 
-- though they are independent (men and women can get any education irrespective
of gender) -- 
they are conditionally dependent given the income ($X_k$). Indeed,
in the example of~\cite{Wer15}, on given level of salary, women had a higher 
level of education  than men. 

The original graph is Markov equivalent to its moral graph 
(see~\cite{Lauritzen}, Proposition 3.28), 
i.e. the directed and undirected Markov properties
hold for them at the same time.

\subsection{Decomposable Models}\label{decomp}

The definition of the (weak) decomposability of a graph is recursive.
\begin{Definition}\label{weak} 
The graph $G$ is \textit{decomposable}  if it is either a complete graph or 
its node-set $V$ can be partitioned into disjoint node subsets 
$A,B,C$  such that 
\begin{itemize}
\item 
$C$ defines a complete subgraph;
\item
$C$ separates $A$ from $B$
(in other words, $C$ is a node cutset between $A$ and $B$);
\item
the subgraphs generated by $A\cup C$ and $B\cup C$ are both decomposable.
\end{itemize} 
\end{Definition}
Here we establish many equivalent properties of a decomposable graph,
based on~\cite{Bollacta,Lauritzen,Wermuth}:
\begin{itemize}
\item  $G$ is 
\textit{triangulated} (with other words, \textit{chordal}), 
i.e. every cycle in $G$ of length at least four has a chord.

\item
$G$ has a \textit{perfect numbering} of its nodes such that in this
labeling, $\textrm{ne} (i) \cap \{ i+1 ,\dots ,d \}$
is a complete subgraph, where $\textrm{ne} (i)$ is the set of neighbors of $i$,
for $i=1, \dots ,d $. It is also called \textit{single
node elimination ordering}
(see~\cite{Wainwright}), and obtainable with the
\textit{Maximal Cardinality Search (MCS)} algorithm of~\cite{Tarjan}.

\item
$G$ has the following \textit{running intersection property}:  
we can number the  cliques of it to form a so-called
\textit{perfect sequence} $C_1 ,\dots ,C_k$ where each combination of 
the subgraphs
induced by $H_{j-1} = C_1 \cup \dots \cup C_{j-1}$ and $C_j$ is a decomposition
$(j=2,\dots ,k)$, i.e. the necessarily complete subgraph 
$S_j = H_{j-1} \cap C_j$ is a separator. More precisely,  $S_j$ is a node 
cutset between the disjoint node subsets $H_{j-1} \setminus S_j$ and 
$R_j =C_j \setminus S_j = H_j \setminus H_{j-1}$. 
This sequence of cliques is also  called a \textit{junction tree (JT)}.

Here any clique $C_j$ is the disjoint union of $R_j$ (called \textit{residual}),
 the nodes of which
are not contained in any $C_i$, $i<j$ and of $S_j$ (called \textit{separator})
with the following  property: there is an  $i^*\in \{ 1, \dots ,j-1 \}$ 
such that 
$$
  S_j = C_j \cap (\cup_{i=1}^{j-1} C_i )  = C_j \cap C_{i^*} .
$$
This (not necessarily unique) $C_{i^*}$ is called 
\textit{parent clique} of $C_j$.
Here $S_1 =\emptyset$ and $R_1 =C_1$.
Furthermore, if such an ordering is possible, a version may be found in which
any prescribed set is the first one.  
Note that the junction tree is indeed a tree with nodes $C_1 ,\dots ,C_k$
and one less edges, that are the separators $S_2 ,\dots ,S_{k}$.
 
\item There is a labeling of the nodes such that the adjacency 
matrix contains a \textit{reducible zero pattern (RZP)}. It means that
the zero entries in the upper-diagonal part of the adjacency matrix form
an index set that is reducible in the following sense.
The index set $I$, which is the subset of the set of edges
$\{ (i,j): \, 1\le i <j \le d\}$, is called reducible if for each 
$(i,j)\in I$ and $h=1,\dots ,i-1$, we have $(h,i) \in I$ or 
$(h,j) \in I$ or both.
 
Indeed, this convenient labeling is a perfect numbering
of the nodes.


\item
The following \textit{Markov chain} property also holds:
$f (\x_{R_j} \, | \, \x_{C_{1} \cup \dots \cup C_{j-1} } )=
  f (\x_{R_j} \, | \, \x_{S_j} )$.

Therefore, if we have a perfect sequence $C_1 ,\dots ,C_k$
of the cliques with separators $S_1 =\emptyset , S_2 ,\dots ,S_k$, 
then for any state configuration $\x$ we have the following form of the density:
\begin{equation}\label{marko}
 f (\x ) = \frac{\prod_{j=1}^k f (\x_{C_j} )}{\prod_{j=2}^k f (\x_{S_j} )} = 
\prod_{i=1}^k f (\x_{R_j} |\x_{S_j} ) .
\quad 
\end{equation}
\end{itemize}

To find the structure, where one of the equivalent criteria of decomposability 
holds, 
we can use the  MCS method of~\cite{Tarjan}.
The simple MCS gives   label $d$ to an arbitrary  node. 
Then the nodes are labeled consecutively, from $d$ down to 1, 
choosing as the next to label a node with a
maximum number of previously labeled neighbors and breaking ties arbitrarily.
(Note that~\cite{Lauritzen} labels the nodes conversely.) 
The MCS ordering is far not unique, and this simple version is not always
capable to find the JT structure behind a triangulated graph in one run,
but another run is needed. There are also variants of this algorithm which
are applicable to a non-triangulated graph too, and capable to triangulate it
with adding minimum number of edges.


For example, let us consider the following adjacency matrices on four nodes:
$$
 \begin{pmatrix}
   1  & 1 & 1 & 1 \\
   1  & 1 & 0 & 1 \\
   1  & 0 & 1 & 1 \\
   1  & 1 & 1 & 1 
 \end{pmatrix} , \quad
\begin{pmatrix}
   1  & 1 & 0 & 1 \\
   1  & 1 & 1 & 1 \\
   0  & 1 & 1 & 1 \\
   1  & 1 & 1 & 1 
 \end{pmatrix} , \quad
\begin{pmatrix}
   1  & 0 & 1 & 1 \\
   0  & 1 & 0 & 1 \\
   1  & 0 & 1 & 1 \\
   1  & 1 & 1 & 1 
 \end{pmatrix} , \quad
\begin{pmatrix}
   1  & 1 & 0 & 1 \\
   1  & 1 & 1 & 0 \\
   0  & 1 & 1 & 1 \\
   1  & 0 & 1 & 1 
 \end{pmatrix} .
$$
Then the upper diagonal part of the first adjacency matrix  does not have the 
RZP, due to the sink V configuration $2 \rightarrow 1 \leftarrow 3$. 
However, as the skeleton is
triangulated, there must be a relabeling of the nodes in which it has the
RZP; e.g. in the 2,1,3,4 permutation of the nodes, where in the second 
matrix,
there is no sink V configuration in the corresponding directed graph.
The third adjacency matrix has an RZP; equivalently, in this
ordering of the variables,  there is no sink V configuration in the
directed graph; further, the skeleton is indeed triangulated.
The last adjacency matrix does not have an RZP in any ordering of the
variables; equivalently, the graph is not triangulated. 
Therefore, we cannot construct a Markov equivalent DAG to it.

\begin{Corollary}\label{moral}
In case of a decomposable model, the perfect ordering of the nodes
(in which the adjacency matrix of the undirected graph has the RZP)
defines a DAG that does not have a sink V configuration. Therefore, it
is Markov equivalent to its undirected skeleton with the same 
conditional independence statements for the components of the underlying
multivariate distribution. Consequently, for $i<j$, there is no directed
$j\to i$ edge if and only if $i$ and $j$ are not connected in the
undirected graph.
\end{Corollary}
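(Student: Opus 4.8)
The plan is to make the edge orientation explicit and then reduce the claim to two facts already collected in this appendix: the equivalence ``RZP $\iff$ no sink V'' of Remark~\ref{sink}, and the Markov equivalence of a DAG with its moral graph (\cite{Lauritzen}, Proposition~3.28). Since the model is decomposable, the undirected skeleton $G$ is triangulated, so by the list of equivalent characterizations in Appendix~\ref{decomp} it admits a perfect numbering $1,\dots ,d$ of its nodes in which the upper-diagonal part of the adjacency matrix $\M$ has an RZP in the sense of Definition~\ref{RZPundir}. Fix any such numbering (the conclusion will not depend on the choice) and orient each edge $\{ i,j \}$ with $i<j$ as $j\to i$; call the resulting DAG $D$. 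By construction $D$ has skeleton $G$, the chosen perfect numbering is a topological labeling of $D$, and the upper-triangular adjacency matrix of $D$ carries exactly the zero pattern of the upper-diagonal part of $\M$, hence an RZP in the sense of Definition~\ref{RZPdir}.

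First I would verify that $D$ contains no sink V. If it did, there would be a triplet $h<i<j$ with $j\to h$, $i\to h$ and no edge between $i$ and $j$, i.e. $m_{hj}\neq 0$, $m_{hi}\neq 0$, $m_{ij}=0$; applying the RZP of Definition~\ref{RZPdir} to the zero entry $m_{ij}$ at the index $h\le i-1$ forces $m_{hi}=0$ or $m_{hj}=0$, a contradiction. (This is precisely Remark~\ref{sink}, read in the present labeling.) Consequently moralization of $D$ adds no new edges, so the moral graph of $D$ equals $G$. By \cite{Lauritzen}, Proposition~3.28, $D$ is Markov equivalent to its moral graph $G$; hence the two graphs encode the same conditional independence statements for the components of $\X_t$. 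Because the underlying distribution is a non-degenerate multivariate Gaussian, its density is strictly positive, so by the Hammersley--Clifford theorem the undirected pairwise, local and global Markov properties with respect to $G$ coincide, and a distribution is Markov with respect to $G$ if and only if it is Markov with respect to $D$.

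For the last assertion, note that ``for $i<j$ there is no $j\to i$ edge in $D$'' is, by the very construction of $D$, the same statement as ``$i$ and $j$ are not connected in $G$'', so the equivalence already holds at the level of edge sets. Combined with the Markov equivalence just established, this says precisely that for $i<j$ the directed pairwise conditional independence $X_i\indep X_j\mid \X_{\pa (i)}$, with $\pa (i)\subseteq\{ i+1,\dots ,d\}$, and the undirected one $X_i\indep X_j\mid \X_{V\setminus\{ i,j \}}$ hold simultaneously; each is reflected as the absence of an edge in its own graph, and the two graphs have the same skeleton $G$.

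The one genuinely delicate point I anticipate is this last reconciliation of conditioning sets: passing between the parent set $\pa (i)\subseteq\{ i+1,\dots ,d\}$ of the directed model and the full complement $V\setminus\{ i,j\}$ of the undirected one cannot be done by comparing skeletons alone --- it requires the full strength of the Markov equivalence of $D$ and $G$, which in turn rests on the positivity (Gaussianity) of the density so that pairwise, local and global Markov properties collapse to a single notion. Everything else --- exhibiting the orientation, transporting the RZP to $D$, and excluding sink V configurations --- follows immediately from the definitions and from Remark~\ref{sink}.
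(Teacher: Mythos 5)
Your proof is correct and follows exactly the route the paper intends: the paper states this Corollary without a separate proof, as a consequence of the preceding material, and your argument assembles precisely those ingredients (perfect numbering with RZP from decomposability, Remark~\ref{sink} to exclude sink V configurations, moralization adding no edges, and Lauritzen's Proposition~3.28 plus positivity of the Gaussian density for the Markov equivalence). Nothing is missing.
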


\subsection{Undirected Gaussian Graphical Model}\label{gauss}

Let $\X \sim {\cal N}_d (\muv ,\Siga )$ be a $d$-variate 
non-degenerate Gaussian random vector 
with expectation $\muv $ and
positive definite, symmetric $d\times d$ covariance matrix $\Siga$. 
The also
positive definite, symmetric matrix $\Siga^{-1}$ of entries $\sigma^{ij}$ 
is called  \textit{concentration matrix}, it appears in the joint density
and its zero entries indicate conditional independences between
two components of $\X$, given the remaining ones.
Mostly, the variables are already centered, so $\muv =\0$ is assumed.

Let us form an undirected graph $G$ on the node-set $V$, 
where $V$ corresponds to the
components of $\X$ and the edges are drawn according to the rule
$$
 i\sim j  \Leftrightarrow \sigma^{ij} \neq 0, \quad i\neq j .
$$ 
This is called undirected \textit{Gaussian graphical model}.
To establish conditional independence statements, we use the following facts.

\begin{Proposition}\label{partial}
Let $\X =(X_1 ,\dots ,X_d )^T \sim {\cal N}_{d} (\0 ,\Siga )$ be a 
random vector, 
and let $V :=
\{ 1,\dots ,d \}$ denote the index set of the variables, $d\ge 3$. Assume
that $\Siga$ is positive definite. Then
$$
 r_{X_i X_j |\X_{V \setminus \{ i, j \} }} = 
    \frac{-\sigma^{ij}}{ \sqrt{ \sigma^{ii} \sigma^{jj} } }  \qquad i\neq j ,
$$
where $r_{X_i X_j |\X_{V \setminus \{ i, j \} }}$ denotes the partial correlation
coefficient between $X_i$ and $X_j$ after eliminating the effect of the
remaining variables $\X_{V \setminus \{ i, j \} }$. Further,
$$
 \sigma^{ii} = 1/(\Var (X_i | \X_{V \setminus \{ i \} } ) , \qquad
   i=1,\dots ,d
$$
is the reciprocal of the conditional (residual) variance of 
$X_i$, given the other variables $\X_{V \setminus \{ i \} }$.
\end{Proposition}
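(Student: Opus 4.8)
The plan is to reduce both assertions to the block-inversion (Schur complement) formula for $\Siga$ and its inverse $\K=\Siga^{-1}=(\sigma^{ij})$. Set $A=\{i,j\}$ and $R=V\setminus\{i,j\}$. First I would recall the standard conditioning rule for Gaussian vectors: the conditional distribution of $\X_A=(X_i,X_j)^T$ given $\X_R$ is again Gaussian, and its covariance matrix is the Schur complement $\Siga_{AA}-\Siga_{AR}\Siga_{RR}^{-1}\Siga_{RA}$, which in particular does not depend on the conditioning value. This step follows by completing the square in the joint density, or by observing that $\X_A-\Siga_{AR}\Siga_{RR}^{-1}\X_R$ has zero covariance with $\X_R$ and is therefore, being jointly Gaussian with it, independent of $\X_R$.

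The key step is then the identity, valid by the rule for inverting a symmetrically partitioned positive definite matrix~\cite{Rozsa}, that this Schur complement is exactly the inverse of the $2\times 2$ principal block $\K_{AA}$. Hence the conditional covariance matrix of $\X_A$ given $\X_R$ equals
\[
 \K_{AA}^{-1}=\begin{pmatrix}\sigma^{ii}&\sigma^{ij}\\ \sigma^{ij}&\sigma^{jj}\end{pmatrix}^{-1}
 =\frac{1}{\sigma^{ii}\sigma^{jj}-(\sigma^{ij})^2}\begin{pmatrix}\sigma^{jj}&-\sigma^{ij}\\ -\sigma^{ij}&\sigma^{ii}\end{pmatrix}.
\]
Since the partial correlation $r_{X_iX_j\mid\X_{V\setminus\{i,j\}}}$ is by definition the ordinary correlation computed from this conditional covariance matrix --- the off-diagonal entry over the geometric mean of the two diagonal entries --- the common factor $1/(\sigma^{ii}\sigma^{jj}-(\sigma^{ij})^2)$ cancels and one is left with $-\sigma^{ij}/\sqrt{\sigma^{ii}\sigma^{jj}}$, as claimed.

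For the last assertion I would rerun the same argument with the coarser partition $V=\{i\}\cup(V\setminus\{i\})$: now $\Var(X_i\mid\X_{V\setminus\{i\}})$ is the scalar Schur complement, which by the same block-inverse identity equals the reciprocal of the $(i,i)$ entry of $\K$, i.e. $1/\sigma^{ii}$; inverting it yields the stated formula. I do not expect a genuine obstacle here: the content is a short Schur-complement computation, and the only things requiring care are the bookkeeping of the permutation that moves $A$ (resp. $\{i\}$) into the leading positions --- which changes none of the quantities involved --- and the use of positive definiteness of $\Siga$, which guarantees that all the inverses exist and that the denominator $\sigma^{ii}\sigma^{jj}-(\sigma^{ij})^2$ is strictly positive.
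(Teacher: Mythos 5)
Your proof is correct: the identification of the conditional covariance of $(X_i,X_j)^T$ given $\X_{V\setminus\{i,j\}}$ with the Schur complement, the block-inversion identity showing this equals $\K_{AA}^{-1}$, and the explicit $2\times 2$ inversion together give exactly the claimed formulas, and the scalar case $A=\{i\}$ gives the variance statement. The paper itself states this proposition without proof as standard background (deferring to \cite{Lauritzen} and \cite{Bollacta}), and your Schur-complement argument is precisely the standard derivation found in those sources, so there is nothing to fault here.
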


\begin{Definition}\label{reg}
Let $\X \sim {\cal N}_{d} (\0 , \Siga )$ be random vector 
with $\Siga$ positive definite.  Consider the regression plane
$$
 \E (X_i | \X_{V \setminus \{ i \} } =\x_{V \setminus \{ i \} } ) 
 =\sum_{j \in V \setminus \{ i \} } 
 \beta_{ji \cdot V \setminus \{ i \} } x_j , \quad j\in V \setminus \{ i \},
$$
where $x_j$'s are the coordinates of $\x_{V \setminus \{ i \} }$. Then
we call the  coefficient $\beta_{ji \cdot V \setminus \{ i \} }$ the
\textit{partial regression coefficient} of $X_j$ when regressing $X_i$
with $\X_{V \setminus \{ i \} }$, $j \in  V \setminus \{ i \}$. 
\end{Definition}

\begin{Proposition}\label{partregr}
$$
 \beta_{ji \cdot V \setminus \{ i \} } =  -\frac{\sigma^{ij}}{\sigma^{ii}} ,
  \quad j\in V \setminus \{ i \}.
$$
\end{Proposition}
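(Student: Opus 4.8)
The plan is to exploit the algebraic identity $\K\Siga=\I$ together with the fact that, for a centered Gaussian vector, the conditional mean $\E(X_i\mid \X_{V\setminus\{i\}})$ coincides with the $L^2$-projection of $X_i$ onto the linear span of $\{X_k:k\ne i\}$. Consequently the coefficients $\beta_{ji\cdot V\setminus\{i\}}$ of Definition~\ref{reg} are characterized by the population normal equations: the residual $X_i-\sum_{k\ne i}\beta_{ki\cdot V\setminus\{i\}}X_k$ is uncorrelated with every $X_j$, $j\ne i$, and this residual is unique.

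First I would introduce the auxiliary variable $Z_i:=\sum_{k\in V}\sigma^{ik}X_k$, i.e. the $i$-th coordinate of the random vector $\K\X$. A one-line computation gives $\cov(Z_i,X_j)=\sum_{k}\sigma^{ik}(\Siga)_{kj}=(\K\Siga)_{ij}=\delta_{ij}$, where $(\Siga)_{kj}$ denotes the $(k,j)$ entry of $\Siga$ itself. Hence $Z_i$ has zero mean, is uncorrelated with $X_j$ for every $j\ne i$, and satisfies $\cov(Z_i,X_i)=1$. Note also $\sigma^{ii}>0$ by positive definiteness of $\K=\Siga^{-1}$ (this is recorded in Proposition~\ref{partial}).

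Next I would rearrange the definition of $Z_i$ as $X_i=\frac{1}{\sigma^{ii}}\bigl(Z_i-\sum_{k\ne i}\sigma^{ik}X_k\bigr)$, which is legitimate since $\sigma^{ii}>0$. This exhibits $X_i$ as the sum of the term $Z_i/\sigma^{ii}$, which is zero-mean and uncorrelated with each $X_j$, $j\ne i$, and the linear combination $-\sum_{k\ne i}(\sigma^{ik}/\sigma^{ii})X_k$ of $\X_{V\setminus\{i\}}$. By uniqueness of the $L^2$-projection, $Z_i/\sigma^{ii}$ is exactly the regression residual, and in the jointly Gaussian case the projection equals the conditional expectation, so
$$
\E(X_i\mid \X_{V\setminus\{i\}})=X_i-\frac{Z_i}{\sigma^{ii}}=-\sum_{k\ne i}\frac{\sigma^{ik}}{\sigma^{ii}}\,X_k .
$$
Comparing the coefficient of $X_j$ with Definition~\ref{reg} yields $\beta_{ji\cdot V\setminus\{i\}}=-\sigma^{ij}/\sigma^{ii}$, as claimed.

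There is no genuine obstacle here; the only points that need care are the justification that the conditional mean of a Gaussian vector is linear and equals the best linear predictor (a standard property), and keeping the two index conventions consistent, namely $(\Siga)_{kj}$ for entries of $\Siga$ versus $\sigma^{ij}$ for entries of $\Siga^{-1}=\K$. An equivalent, purely matrix-algebraic route would partition both $\Siga$ and $\K$ into the block indexed by $\{i\}$ and its complement and read the claim off the Schur-complement formula for the inverse of a partitioned matrix, using $\K_{i,\,V\setminus\{i\}}=-\sigma^{ii}\,\Siga_{i,\,V\setminus\{i\}}\Siga_{V\setminus\{i\},\,V\setminus\{i\}}^{-1}$ together with $\E(X_i\mid\X_{V\setminus\{i\}})=\Siga_{i,\,V\setminus\{i\}}\Siga_{V\setminus\{i\},\,V\setminus\{i\}}^{-1}\X_{V\setminus\{i\}}$; this is the same computation in heavier notation.
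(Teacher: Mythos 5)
Your proof is correct. Note that the paper itself offers no proof of Proposition~\ref{partregr}: it appears in Appendix~\ref{gauss} as background material quoted from the cited references (\cite{Bollacta,Lauritzen}), so there is no in-paper argument to compare against. Your derivation is the standard one: the auxiliary variable $Z_i=(\K\X)_i$ satisfies $\cov(Z_i,X_j)=(\K\Siga)_{ij}=\delta_{ij}$, so rewriting $X_i=\frac{1}{\sigma^{ii}}Z_i-\sum_{k\ne i}\frac{\sigma^{ik}}{\sigma^{ii}}X_k$ exhibits the unique orthogonal decomposition of $X_i$ into its projection onto $\mathrm{span}\{X_k: k\ne i\}$ plus a residual, and Gaussianity identifies the projection with the conditional mean of Definition~\ref{reg}. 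The argument is complete; the only hypotheses you use ($\Siga$ positive definite, hence $\sigma^{ii}>0$ and uniqueness of the regression coefficients) are exactly those assumed in the surrounding text, and as a bonus your residual $Z_i/\sigma^{ii}$ has variance $1/\sigma^{ii}$, which recovers the second claim of Proposition~\ref{partial} for free.
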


\begin{Corollary}\label{cor}
An important consequence of Propositions~\ref{partial} and~\ref{partregr} is that 
$$
 \beta_{ji \cdot V \setminus \{ i \} } = 
 r_{X_i X_j |\X_{V \setminus \{ i, j \} }} \sqrt{\frac{ \sigma^{jj}}{ \sigma^{ii} } } =
r_{X_i X_j |\X_{V \setminus \{ i, j \} }} 
 \sqrt{\frac{\Var (X_i | \X_{V \setminus \{ i \} } ) }{ \Var (X_j | 
 \X_{V \setminus \{ j \} } ) } } ,  \quad j\in V \setminus \{ i \}.
$$
\end{Corollary}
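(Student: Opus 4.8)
The statement is an immediate consequence of the two preceding propositions, so the plan is merely to record the algebra that links them. First I would start from Proposition~\ref{partregr}, which gives $\beta_{ji\cdot V\setminus\{i\}} = -\sigma^{ij}/\sigma^{ii}$. Next I would rewrite Proposition~\ref{partial} in the form $-\sigma^{ij} = r_{X_iX_j|\X_{V\setminus\{i,j\}}}\sqrt{\sigma^{ii}\sigma^{jj}}$, obtained by solving the displayed partial-correlation identity for $-\sigma^{ij}$; here $\sigma^{ii},\sigma^{jj}>0$ because $\Siga$ is positive definite and hence so is $\Siga^{-1}$, so the square root is well defined. Substituting this expression into the formula for $\beta_{ji\cdot V\setminus\{i\}}$ and cancelling one factor of $\sqrt{\sigma^{ii}}$ yields $\beta_{ji\cdot V\setminus\{i\}} = r_{X_iX_j|\X_{V\setminus\{i,j\}}}\sqrt{\sigma^{jj}/\sigma^{ii}}$, which is the first claimed equality.

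For the second equality I would invoke the last assertion of Proposition~\ref{partial}, namely $\sigma^{ii} = 1/\Var(X_i\mid\X_{V\setminus\{i\}})$ and likewise $\sigma^{jj} = 1/\Var(X_j\mid\X_{V\setminus\{j\}})$; dividing these gives $\sigma^{jj}/\sigma^{ii} = \Var(X_i\mid\X_{V\setminus\{i\}})/\Var(X_j\mid\X_{V\setminus\{j\}})$, and taking square roots finishes the derivation.

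There is no genuine obstacle here; the only points requiring a little care are the sign conventions (the minus signs carried by both propositions cancel, so $\beta$ and $r$ share the same sign, as they must) and the positivity of the diagonal concentration entries, which is already secured by the standing assumption that $\Siga$ is positive definite. For the reader's intuition I would also note that this is simply the familiar \emph{standardized regression coefficient} identity: the partial regression coefficient equals the partial correlation rescaled by the ratio of the conditional (residual) standard deviations of the two variables involved.
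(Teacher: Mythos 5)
Your derivation is correct and is exactly the computation the paper intends: the corollary is stated as an immediate consequence of Propositions~\ref{partial} and~\ref{partregr}, and the paper leaves precisely this substitution (cancelling the signs and one factor of $\sqrt{\sigma^{ii}}$, then converting $\sigma^{ii},\sigma^{jj}$ to reciprocal conditional variances) to the reader.
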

(The formula is analogous to the one of unconditioned regression.)
So only  the variables $X_j$'s whose partial correlation with $X_i$
(after eliminating the effect of the remaining variables) is not 0, enter into
the regression of $X_i$ with the other variables. 

For $i\neq j$ we want to test
$$
 H_0 \, : \, r_{X_i X_j |\X_{V \setminus \{ i, j \} }} =0,
$$
i.e. that $X_i$ and $X_j$ are conditionally independent conditioned
on the remaining variables. Equivalently, $H_0$ means that 
$\beta_{ij | V \setminus \{ i \} } =0$,
$\beta_{ji | V \setminus \{ j \} } =0$, or simply, $\sigma^{ij}=\sigma^{ji}=0$
($\Siga >0$ is assumed).

To test $H_0$ in some form, several exact tests are known that are
usually based on likelihood  ratio tests. 
The following test uses the empirical partial
correlation coefficient, denoted by
${\hat r}_{X_i X_j |\X_{V  \setminus \{ i, j \} }}$, and the following
statistic is based on it:
$$
 B=1- ( {\hat r}_{X_i X_j |\X_{V \setminus \{ i, j \} }} )^2 =
 \frac{|\SSS_{V \setminus \{ i,j \} }| \cdot |\SSS_{V} | }
  {|\SSS_{V \setminus \{ i\} }| \cdot |\SSS_{V \setminus \{ j\} } |  } ,
$$  
where $\SSS$ is the sample size ($n$) times the empirical covariance matrix
of the variables in the subscript
(its entries are the product-moments).

It can be proven that under $H_0$, the test statistic
$$
 t=\sqrt{n-d} \cdot \sqrt{ \frac1{B} -1 } 
 =\sqrt{n-d} \cdot \frac{{\hat r}_{X_i X_j |\X_{\V \setminus \{ i, j \} }} }
 {\sqrt{1-({\hat r}_{X_i X_j |\X_{V \setminus \{ i, j \} } } )^2  } }
$$
is distributed as Student's $t$ with $n-d$ degrees of freedom. Therefore,
we reject $H_0$ for large values of $|t|$.

\subsection{Covariance Selection}\label{undirg}

For practical purposes we use the empirical partial correlation coefficients,
and based on them,  the above exact test to check whether they
significantly differ from 0 or not. 
  In this way, we put
zeros into the no-edge positions $ij$'s of the inverse covariance 
matrix (where partial correlations do not significantly differ from zero), 
and fit a so-called \textit{covariance selection model}. 
The restricted covariance matrix is denoted by $\Siga^*$.
We want to estimate $\Siga^*$ from the iid.
sample $\X_1 ,\dots ,\X_n \sim {\cal N}_{d} (\0, \Siga )$ $(n>d)$, 
such that the estimate ${\hat \K}$ of  ${\Siga^*}^{-1}$ 
has zero entries in the no-edge  positions. 

In Theorem 5.3 of~\cite{Lauritzen}) it is proved, that under the covariance
selection model,  the ML-estimator of
the mean vector is the sample mean $\bar \X$, and the restricted 
covariance matrix $\Siga^* =(\sigma^*_{ij})$ 
is estimated as follows.  The entries in the
edge-positions are estimated as in the saturated model (no restrictions): 
\begin{equation}\label{ips}
 {\hat \sigma^* }_{ij} = \frac1{n} s_{ij} , \quad \{ i,j \} \in E ,
\end{equation}
where $\SSS =(s_{ij} ) =\sum_{\ell =1}^{n} ( \X_{\ell} -{\bar \X} ) ( \X_{\ell}  -
{\bar \X} )^T$ is the usual product moment estimate.
The other entries (in the no-edge positions) of $\Siga^*$ are free, 
but satisfy the model conditions: after taking the inverse $\K$ of 
$\Siga^*$ with these undetermined entries,  we get the same number 
of equations for them from $k^{ij} =0$ whenever $\{ i,j \} \notin E$. 
To do so, there are numerical algorithms at our disposal, for instance, the 
 Iterative Proportional Scaling (IPS), see~\cite{Lauritzen}, p. 134, where
an infinite iteration is needed, because in general, there is no explicit 
solution for the ML estimate. However, the fixed point of this iteration
gives a unique positive definite matrix $\hat \K$. 

In the decomposable case, there is no need of
running the IPS, but explicit estimate can be given as follows.
Recall that if the Gaussian graphical model is decomposable 
(its concentration graph $G$ is decomposable),
then the cliques,
together with their separators (with possible multiplicities), form a JT
structure. Denote $\cal C$ the set of the 
cliques and $\cal S$ the set of the separators in $G$.
Then direct density estimates, using (\ref{marko}), are available.
In particular, the ML estimator of $\K$ can be calculated based on the 
product moment
estimators applied for subsets of the variables, corresponding
to the cliques and separators.

Let $n$ be the size of the sample for the underlying $d$-variate normal 
distribution, and assume that $n>d$.
For the clique $C \in {\cal C}$, let  $[\SSS_{C}]^{V }$ denote
$n$ times the empirical covariance matrix corresponding to the variables
$\{ X_i : \, i\in C \}$ complemented with zero entries to have a
$d\times d$ (symmetric, positive semidefinite) matrix. Likewise,
for the separator $S \in {\cal S}$, let  $[\SSS_{S}]^{V }$ denote
$n$ times the empirical covariance matrix corresponding to the variables
$\{ X_i : \, i\in S \}$ complemented with zero entries to have an 
$d\times d$ (symmetric, positive semidefinite) matrix. Then the ML
estimator of the mean vector is the sample average (as usual), while the
ML estimator of the concentration matrix is
\begin{equation}\label{Kestimated}
 {\hat \K} = n\left\{ \sum_{C\in {\cal C}} [\SSS_{C}^{-1}]^{V } -
             \sum_{S\in {\cal S}} [\SSS_{S}^{-1}]^{V }  \right\} ,
\end{equation}
see Proposition 5.9 of~\cite{Lauritzen}. This proposition states that 
the above ML estimate exists with probability one if and only if 
$n$ is greater than the maximum clique size.
In the financial example of Section~\ref{appl}, the cliques and separators of
Equation~\eqref{klikkek} are used. Then the estimate of $\K$ is as follows:
$$
\begin{aligned}
\hat \K &=  
\begin{pmatrix}
0 & 0 & 0 & 0 & 0 & 0 & 0 & 0 \\
0 & 0 & 0 & 0 & 0 & 0 & 0 & 0 \\
0 & 0 & \hat{k}_{\{3,4,5,6,7,8\}}^{33} & \hat{k}_{\{3,4,5,6,7,8\}}^{34} & \hat{k}_{\{3,4,5,6,7,8\}}^{35} & \hat{k}_{\{3,4,5,6,7,8\}}^{36} & \hat{k}_{\{3,4,5,6,7,8\}}^{37} & \hat{k}_{\{3,4,5,6,7,8\}}^{38} \\
0 & 0 & \hat{k}_{\{3,4,5,6,7,8\}}^{43} & \hat{k}_{\{3,4,5,6,7,8\}}^{44} & \hat{k}_{\{3,4,5,6,7,8\}}^{45} & \hat{k}_{\{3,4,5,6,7,8\}}^{46} & \hat{k}_{\{3,4,5,6,7,8\}}^{47} & \hat{k}_{\{3,4,5,6,7,8\}}^{48} \\
0 & 0 & \hat{k}_{\{3,4,5,6,7,8\}}^{53} & \hat{k}_{\{3,4,5,6,7,8\}}^{54} & \hat{k}_{\{3,4,5,6,7,8\}}^{55} & \hat{k}_{\{3,4,5,6,7,8\}}^{56} & \hat{k}_{\{3,4,5,6,7,8\}}^{57} & \hat{k}_{\{3,4,5,6,7,8\}}^{58} \\
0 & 0 & \hat{k}_{\{3,4,5,6,7,8\}}^{63} & \hat{k}_{\{3,4,5,6,7,8\}}^{64} & \hat{k}_{\{3,4,5,6,7,8\}}^{65} & \hat{k}_{\{3,4,5,6,7,8\}}^{66} & \hat{k}_{\{3,4,5,6,7,8\}}^{67} & \hat{k}_{\{3,4,5,6,7,8\}}^{68} \\
0 & 0 & \hat{k}_{\{3,4,5,6,7,8\}}^{73} & \hat{k}_{\{3,4,5,6,7,8\}}^{74} & \hat{k}_{\{3,4,5,6,7,8\}}^{75} & \hat{k}_{\{3,4,5,6,7,8\}}^{76} & \hat{k}_{\{3,4,5,6,7,8\}}^{77} & \hat{k}_{\{3,4,5,6,7,8\}}^{78} \\
0 & 0 & \hat{k}_{\{3,4,5,6,7,8\}}^{83} & \hat{k}_{\{3,4,5,6,7,8\}}^{84} & \hat{k}_{\{3,4,5,6,7,8\}}^{85} & \hat{k}_{\{3,4,5,6,7,8\}}^{86} & \hat{k}_{\{3,4,5,6,7,8\}}^{87} & \hat{k}_{\{3,4,5,6,7,8\}}^{88}
\end{pmatrix} \\
&+
\begin{pmatrix}
0 & 0 & 0 & 0 & 0 & 0 & 0 & 0 \\
0 & \hat{k}_{\{2,3,5,6,7\}}^{22} & \hat{k}_{\{2,3,5,6,7\}}^{23} & 0 & \hat{k}_{\{2,3,5,6,7\}}^{25} & \hat{k}_{\{2,3,5,6,7\}}^{26} & \hat{k}_{\{2,3,5,6,7\}}^{27} & 0 \\
0 & \hat{k}_{\{2,3,5,6,7\}}^{32}& \hat{k}_{\{2,3,5,6,7\}}^{33} & 0 & \hat{k}_{\{2,3,5,6,7\}}^{35} & \hat{k}_{\{2,3,5,6,7\}}^{36} & \hat{k}_{\{2,3,5,6,7\}}^{37} & 0 \\
0 & 0 & 0 & 0 & 0 & 0 & 0 & 0 \\
0 & \hat{k}_{\{2,3,5,6,7\}}^{52} & \hat{k}_{\{2,3,5,6,7\}}^{53} & 0 & \hat{k}_{\{2,3,5,6,7\}}^{55} & \hat{k}_{\{2,3,5,6,7\}}^{56} & \hat{k}_{\{2,3,5,6,7\}}^{57} & 0 \\
0 & \hat{k}_{\{2,3,5,6,7\}}^{62} & \hat{k}_{\{2,3,5,6,7\}}^{63} & 0 &\hat{k}_{\{2,3,5,6,7\}}^{65} & \hat{k}_{\{2,3,5,6,7\}}^{66} & \hat{k}_{\{2,3,5,6,7\}}^{67} & 0 \\
0 & \hat{k}_{\{2,3,5,6,7\}}^{72} & \hat{k}_{\{2,3,5,6,7\}}^{73} & 0 & \hat{k}_{\{2,3,5,6,7\}}^{75} & \hat{k}_{\{2,3,5,6,7\}}^{76} & \hat{k}_{\{2,3,5,6,7\}}^{77} & 0 \\
0 & 0 & 0 & 0 & 0 & 0 & 0 & 0 \\
\end{pmatrix} \\
&+
\begin{pmatrix}
\hat{k}_{\{1,4,5\}}^{11} & 0 & 0 & \hat{k}_{\{1,4,5\}}^{14} & \hat{k}_{\{1,4,5\}}^{15} & 0 & 0 & 0 \\
0 & 0 & 0 & 0 & 0 & 0 & 0 & 0 \\
0 & 0 & 0 & 0 & 0 & 0 & 0 & 0 \\
\hat{k}_{\{1,4,5\}}^{41} & 0 & 0 & \hat{k}_{\{1,4,5\}}^{44} & \hat{k}_{\{1,4,5\}}^{45} & 0 & 0 & 0 \\
\hat{k}_{\{1,4,5\}}^{51} & 0 & 0 & \hat{k}_{\{1,4,5\}}^{54} & \hat{k}_{\{1,4,5\}}^{55} & 0 & 0 & 0 \\
0 & 0 & 0 & 0 & 0 & 0 & 0 & 0 \\
0 & 0 & 0 & 0 & 0 & 0 & 0 & 0 \\
0 & 0 & 0 & 0 & 0 & 0 & 0 & 0 \\
\end{pmatrix} \\
&-
\begin{pmatrix}
0 & 0 & 0 & 0 & 0 & 0 & 0 & 0 \\
0 & 0 & 0 & 0 & 0 & 0 & 0 & 0 \\
0 & 0 & \hat{k}_{\{3,5,6,7\}}^{33} & 0 & \hat{k}_{\{3,5,6,7\}}^{35} & \hat{k}_{\{3,5,6,7\}}^{36} & \hat{k}_{\{3,5,6,7\}}^{37} & 0 \\
0 & 0 & 0 & 0 & 0 & 0 & 0 & 0 \\
0 & 0 & \hat{k}_{\{3,5,6,7\}}^{53} & 0 & \hat{k}_{\{3,5,6,7\}}^{55} & \hat{k}_{\{3,5,6,7\}}^{56} & \hat{k}_{\{3,5,6,7\}}^{57} & 0 \\
0 & 0 & \hat{k}_{\{3,5,6,7\}}^{63} & 0 & \hat{k}_{\{3,5,6,7\}}^{65} & \hat{k}_{\{3,5,6,7\}}^{66} & \hat{k}_{\{3,5,6,7\}}^{67} & 0 \\
0 & 0 & \hat{k}_{\{3,5,6,7\}}^{73} & 0 & \hat{k}_{\{3,5,6,7\}}^{75} & \hat{k}_{\{3,5,6,7\}}^{76} & \hat{k}_{\{3,5,6,7\}}^{77} & 0 \\
0 & 0 & 0 & 0 & 0 & 0 & 0 & 0 \\
\end{pmatrix} \\
&-
\begin{pmatrix}
0 & 0 & 0 & 0 & 0 & 0 & 0 & 0 \\
0 & 0 & 0 & 0 & 0 & 0 & 0 & 0 \\
0 & 0 & 0 & 0 & 0 & 0 & 0 & 0 \\
0 & 0 & 0 & \hat{k}_{\{4,5\}}^{44} & \hat{k}_{\{4,5\}}^{45} & 0 & 0 & 0 \\
0 & 0 & 0 & \hat{k}_{\{4,5\}}^{54} & \hat{k}_{\{4,5\}}^{55} & 0 & 0 & 0 \\
0 & 0 & 0 & 0 & 0 & 0 & 0 & 0 \\
0 & 0 & 0 & 0 & 0 & 0 & 0 & 0 \\
0 & 0 & 0 & 0 & 0 & 0 & 0 & 0 \\
\end{pmatrix} \\
\end{aligned}
$$

\subsection{Directed Gaussian Graphical Model}\label{rec}

Here, in a topological (DAG) ordering of the variables, 
recursive linear regressions are
introduced in such a way that $X_i$ is regressed with $X_j$s for $j>i$ and
a zero partial correlation coefficient means no $j\to i$ directed edge.
 In~\cite{Wermuth} the following model of recursive
linear equations was introduced.
Let $\X \sim {\cal N}_d (\0 ,\Siga )$ be a $d$-dimensional Gaussian random 
vector with real coordinates of zero expectation. Then
\begin{equation}\label{matrix}
 \A \X =\U \quad \textrm{with} \quad \U =(U_1 ,\dots ,U_d )^T  
\sim {\cal N}_d (\0 ,\Dada ) ,
\end{equation}  
where $\A$ is a $d\times d$  upper triangular 
matrix with 1s along its
main diagonal, otherwise it contains the negatives of the partial regression 
coefficients $a_{ij}$'s, when $X_i$ is the target of a multivariate 
linear regression with predictors $\{ X_j : \, j >i \}$. 
Actually, $a_{ij}$ is a ``path coefficient'' that is a scaled measure of the 
$X_j \to X_i$ influence, and
one can use statistical tests for its significance.
As for $\Dada =\diag (\delta_1 ,\dots ,\delta_d )$, it
is a $d\times d$ diagonal matrix with positive diagonal entries, 
covariance matrix of the error term $\U$.

Taking the covariance matrix on both sides of~\eqref{matrix}, we get
\begin{equation}\label{kell}
 \E [ ( \A \X ) (\A \X )^T ] =\A \Siga \A^T = \Dada .
\end{equation}
In~\cite{Wermuth}, the entries of $\A$ and $\Dada$ are characterized as partial
correlations and residual variances. However, in~\cite{Bollacta} it is shown
that the decomposition in Equation~\eqref{kell} is obtainable by the 
standard LDL (variant of the simple Cholesky)  
decomposition of  $\Siga^{-1}$ as follows:
\begin{equation}\label{LDLdecomp}
 \Siga^{-1} = \LL \Dada^{-1} \LL^T .
\end{equation}
This decomposition of the positive definite matrix
$\Siga^{-1}$ is unique, where $\LL $ is lower triangular of entries $1$s 
along its main diagonal and $\Dada^{-1}$ is a diagonal matrix of 
entries all positive along its main diagonal. 
By uniqueness, $\A =\LL^T$ and $\Dada$ give the 
solution for the parameters of the original model (\ref{matrix}).

Without using the LDL decomposition,  in~\cite{Wermuth} the
following explanation for $a_{ij}$ is  given based on the 
recursive equations
\begin{equation}\label{wer}
 X_i = -\sum_{j=i+1}^d a_{ij} X_j + U_i , \quad i=1,\dots ,d.
\end{equation}
Equation~\eqref{wer}  shows that for $i<j$, 
 $a_{ij} =-\beta_{ij \cdot \{ i+1 ,\dots ,d \} }$, where
$\beta_{ij \cdot \{ i+1 ,\dots ,d \} }$ is  the  partial regression 
coefficient of $X_j$ when  regressing $X_i$
with $X_{i+1} ,\dots ,X_d $. 
So $a_{ij} =0$ if and only if $X_i$ and $X_j$ are conditionally independent, 
given
the variables with indices in the conditioning set (this means conditional and
not marginal independence). However, to find $a_{ij}$ we do not need this
formula, but we can find all upper-diagonal entries of $\A$, at the same time,
 with the LDL decomposition of the concentration matrix in~\eqref{LDLdecomp}. 
Actually, this is the 
same as the algorithm proposed in~\cite{Wermuth}, 
when $a_{ij}$ is obtained with the inverse of the 
restricted covariance matrix marginalized 
to $X_i , X_{i+1} ,\dots ,X_d$; it uses the sweeping technique of the 
Cholesky recursion  for each column of $\A$ separately.

Note that at this point, the ordering of the jointly 
Gaussian variables is not relevant, since in any recursive ordering of them 
(encoded in $\A$) a Gaussian directed graphical model
(in other words, a Gaussian Bayesian network) can be constructed, where every 
variable is regressed linearly with the higher index ones. 
This is due to the 
solvability of the recursive equation system~\eqref{matrix}  with the LDL 
decomposition~\eqref{LDLdecomp} in any ordering of the rows and columns of
$\Siga$. 

However, in the presence of an RZP in the undirected model, 
the perfect ordering of the nodes produces a
directed Gaussian graphical model without sink V configuration. This is
Markov equivalent to the undirected one, and so, a recursive ordering of
the variables exists in which
$\beta_{ij \cdot \{ i+1 ,\dots ,d \} }$, the  partial regression 
coefficient of $X_j$ when  regressing $X_i$
with $X_{i+1} ,\dots ,X_d $, is zero exactly when 
$r_{ij \cdot \{ 1 ,\dots ,d \} \setminus \{ i,j \}  }$ (the  partial correlation 
coefficient between $X_i$ and $X_j$, excluding the effect of all other
variables) is zero. It does not mean that they have the same value, but they
are zeros at the same time.
In this case,
the positions of the zero entries in the upper diagonal part of the 
concentration matrix  are identical 
to the  positions of the zero entries in $\A$. 
This is the special case of Corollary~\ref{moral} for Gaussian graphical models.

\end{appendices}

\end{document}